\documentclass[a4paper,fleqn]{cas-dc}
\usepackage{amsmath,amsfonts}
\usepackage{algorithmic}
\usepackage{algorithm}
\usepackage{array}
\usepackage[caption=false,font=normalsize,labelfont=sf,textfont=sf]{subfig}
\usepackage{textcomp}
\usepackage{url}
\usepackage{verbatim}
\usepackage{graphicx}
\usepackage[numbers]{natbib}
\usepackage{tabularx}
\usepackage{xcolor}  
\usepackage{ifthen}
\usepackage{subcaption}
\usepackage{url}
\usepackage{bm}

\newtheorem{assumption}{Assumption}
\newtheorem{theorem}{Theorem}
\newtheorem{proposition}[theorem]{Proposition}
\newtheorem{corollary}[theorem]{Corollary}

\newcommand{\nomentry}[2]{%
  \ensuremath{#1} & #2\\[1pt]
}

\allowdisplaybreaks[3]

\newproof{proof}{Proof}

\begin{document}

\shorttitle{Carbon-Aware Data Center Workload Allocation: Emission Disclosure, Capacity Leasing, and Contract Reshuffling}
\shortauthors{Chen et al.}

\title[mode=title]{Carbon-Aware Data Center Workload Allocation: Emission Disclosure, Capacity Leasing, and Contract Reshuffling}

\author[1,2]{Yihsu Chen}
\ead{yihsuchen@ucsc.edu}

\author[2]{Fargol Nematkhah}
\ead{fnematkh@ucsc.edu}

\author[3]{Abel Souza}
\ead{absouza@ucsc.edu}

\author[4]{Andrew L. Liu}
\cormark[1]
\ead{andrewliu@purdue.edu}

\affiliation[1]{
 organization={Environmental Studies Department, University of California, Santa Cruz},
 city={Santa Cruz},
 state={CA},
 country={USA}}

\affiliation[2]{
 organization={Electrical and Computer Engineering Department, University of California, Santa Cruz},
 city={Santa Cruz},
 state={CA},
 country={USA}}

\affiliation[3]{
 organization={Computer Science and Engineering Department, University of California, Santa Cruz},
 city={Santa Cruz},
 state={CA},
 country={USA}}

\affiliation[4]{
 organization={Edwardson School of Industrial Engineering, Purdue University},
 city={West Lafayette},
 state={IN},
 country={USA}}
\cortext[1]{Corresponding author}




\begin{abstract}
The rapid adoption of AI has driven rapid growth in computational demand, with large language models at the forefront since ChatGPT's debut in 2022.
Meanwhile, large amounts of renewable energy are ultimately curtailed due to transmission congestion and inadequate demand.
This work develops a power market model that allows hyperscalers to spatially migrate LLM inference workloads to geo-distributed modular datacenters (MDCs) co-located with renewable generation at the edge of the network.
We introduce the optimization problems faced by the hyperscaler and MDCs in addition to consumers, producers, and the electric grid operator, where the hyperscaler leases MDC capacity while ensuring that required service level objectives (SLOs) are met.
The overall market model is formulated as a complementarity problem, for which we establish equilibrium existence and uniqueness of certain aggregate market quantities.
We further show that bilateral contract allocations can vary while preserving the same physical market outcome, so cleaner contract-attributed procurement need not imply additional clean generation.
Applying the model to the IEEE RTS-24 bus system, we find that even when MDCs disclose the CO$_2$ emissions associated with their energy supply, renting less polluting MDCs yields limited system emission reductions because of \textit{contract reshuffling}.
This effect can be mitigated if conventional loads are supplied through forward contracts such as power purchase agreements.
Interestingly, this also reduces system congestion as the hyperscaler becomes increasingly cost-aware. 
\end{abstract}

\begin{keywords}
Data centers \sep Carbon-aware computing \sep Modular data centers
\sep Electricity markets \sep Complementarity problems
\sep Contract reshuffling \sep Emission disclosure
\end{keywords}
\maketitle
\begin{table*}[pos=t]
\centering
\setlength{\fboxsep}{8pt}
\fbox{%
\begin{minipage}{0.965\textwidth}

\textbf{Nomenclature}

\vspace{0.7em}

\begin{minipage}[t]{0.485\textwidth}

\textit{Sets and indices}

\vspace{0.3em}

\begin{tabularx}{\linewidth}{@{}l@{\hspace{1em}}X@{}}
\nomentry{B}{Set of all batches}
\nomentry{I^d}{Set of buses hosting only conventional load}
\nomentry{I^\kappa}{Set of buses with a hyperscale datacenter}
\nomentry{I^\chi}{Set of buses with a modular datacenter}
\nomentry{I}{Set of all buses; $I=I^d\cup I^\kappa\cup I^\chi$}
\nomentry{H_j}{Set of all generators connected to bus $j\in I$}
\nomentry{H_i^\omega}{Set of renewable resources connected to bus $i$}
\nomentry{K}{Set of all transmission lines}
\nomentry{K_i}{Set of transmission lines connected to bus $i\in I$}
\nomentry{T}{Set of all time periods}
\end{tabularx}

\vspace{0.7em}
\textit{Parameters}

\vspace{0.3em}

\begin{tabularx}{\linewidth}{@{}l@{\hspace{1em}}X@{}}
\nomentry{\delta}{Weight coefficient in the hyperscale datacenter's objective function}
\nomentry{\nu}{Coefficient representing the number of GPUs consuming one MW in an hour (GPU/MW)}
\nomentry{\mathrm{Cap}_i}{Power capacity of modular datacenter at location $i$ (MW)}
\nomentry{e_{it}}{Carbon emission rate at bus $i$ and time $t$ (ton/MWh)}
\nomentry{F_k}{Thermal limit of transmission line $k$ (MW)}
\nomentry{g^c_{iht}}{Curtailed energy of renewable resource $h$ at bus $i$ at time $t$ (MW)}
\nomentry{G_{ih}}{Capacity of generator $h$ connected to bus $i$ (MW)}
\nomentry{\mathrm{G}_{jhit}^{FC}} {Producers' sales in the absence of data centers workload (MW)}
\end{tabularx}

\end{minipage}
\hfill
\begin{minipage}[t]{0.485\textwidth}


\vspace{0.3em}

\begin{tabularx}{\linewidth}{@{}l@{\hspace{1em}}X@{}}
\nomentry{\mathrm{PTDF}_{ki}}{Power transfer distribution factor of transmission line $k$}
\nomentry{q_b}{Computing load of batch $b$ (MWh)}
\nomentry{\tau}{Producers' forward contract position (unitless)}
\end{tabularx}

\vspace{0.7em}
\textit{Decision variables}

\vspace{0.3em}

\begin{tabularx}{\linewidth}{@{}l@{\hspace{1em}}X@{}}
\nomentry{d_{jhit}}{Conventional electricity demand at location $i$ from power plant $h$ at bus $j$ at time $t$ (MWh)}
\nomentry{g_{jhit}}{Energy output of generator $h$ at bus $j$ delivered to bus $i$ at time $t$ (MWh)}
\nomentry{y_{it}}{Energy injection/withdrawal at bus $i$ at time $t$ (MWh)}
\nomentry{\alpha_{bit}}{GPU leasing price for batch $b$ at MDC $i$ at time $t$ (\$/GPU)}
\nomentry{p_{it}}{Energy withdrawal of modular datacenter at bus $i$ and time $t$ (MWh)}
\nomentry{k^r_{bit}}{Energy consumption of batch $b$ agreed by the modular datacenter at bus $i$ at time $t$ (MWh)}
\nomentry{s_{it}}{Energy spillover of contracted renewable energy at bus $i$ at time $t$ (MWh)}
\nomentry{k^s_{bit}}{Energy consumption of batch $b$ contracted by the hyperscaler to be processed at bus $i$ at time $t$ (MWh)}
\nomentry{\ell_{bjhit}}{Energy purchased from plant $h$ at bus $j$ to process batch $b$ locally at hyperscaler $i$ at time $t$ (MWh)}
\nomentry{\theta^a_{jhit}}{Bilateral contract price for participant type $a\in\{d,\chi,\kappa\}$ (\$/MWh)}
\end{tabularx}
\end{minipage}
\end{minipage}%
}
\end{table*}
\section{Introduction}\label{sec:intro}

As AI-driven datacenter loads become large enough to materially affect
power systems, an important question is no longer only where a datacenter
consumes electricity, but what changes elsewhere in the market when it
seeks cleaner electricity. Consider a hyperscaler that increases its
procurement from a low-emission generator. If not all electricity
procurement is locked in by pre-existing contracts, another consumer may
reduce its purchases from that generator and instead procure electricity
from a higher-emission generator. The hyperscaler's contract-attributed
emissions can therefore fall even though the output of each generator,
and hence total system emissions, remains unchanged. We refer to this
reallocation of electricity purchases across bilateral contracts as
\textit{contract reshuffling}.

This issue is becoming increasingly important as the rapid adoption of
AI, particularly large language models (LLMs), drives substantial growth
in datacenter electricity demand. At the same time, major technology firms are making large, long-term commitments to clean electricity procurement, such as Microsoft's agreement with Constellation supporting the restart of the Crane Clean Energy Center and Google's agreement with Kairos Power \cite{constellation2024,terrell25}. As both the scale of datacenter demand and the volume of clean procurement grow, whether such procurement translates into actual reductions in system-wide emissions becomes correspondingly consequential.

A substantial literature studies carbon-aware datacenter operation by
shifting computing workloads across time or location in response to
electricity prices, renewable availability, or carbon-intensity signals
\cite{dou17,dandres17,lindberg21,paramanayakam25,stojkovic25}.
Much of this literature treats these signals as exogenous to the
datacenter's workload decisions. This can be problematic for large
workload shifts: marginal emission rates characterize the response to an
incremental change in load and may not accurately predict the resulting
emissions when a nontrivial load shift changes generation dispatch and
network congestion. Recent studies have therefore begun to account for
the endogenous response of the power system. Jiang et al.~\cite{jiang26},
for example, formulate an equilibrium model of carbon-sensitive
electricity consumers and show that average carbon-intensity signals
need not induce the intended system-wide emission reductions. Other
recent studies similarly demonstrate that the emissions consequences of
datacenter flexibility depend on the resulting generation response and
power-system conditions \cite{riepin25,senga26}. These studies establish
the importance of modeling the physical power-system response, but do
not examine how bilateral electricity procurement and the contractual
attribution of generation adjust as datacenter workloads are reallocated.

A separate electricity-market literature does examine how contractual
electricity procurement can diverge from physical system outcomes. In
the context of regional carbon regulation, \cite{chen11} shows that
bilateral electricity contracts can be reshuffled so that cleaner
generation is attributed to regulated consumers while more
carbon-intensive generation is reassigned elsewhere, without a
corresponding reduction in total emissions. Related work examines
closely related behavior under the term \textit{resource reshuffling}
\cite{bushnell14,fowlie21}. This literature, however, does not consider
carbon-aware computing, workload migration, or endogenous transactions
in computing capacity. The interaction between carbon-aware workload
allocation and electricity-contract reshuffling therefore remains
largely unexplored.

This interaction is becoming particularly relevant with the emergence
of modular datacenters (MDCs). MDCs are prefabricated, factory-integrated facilities
that package computing, power, and cooling infrastructure into
standardized units that can be deployed and expanded more rapidly than
conventional site-built datacenters. Commercial systems are already
available for high-density AI and hyperscale applications
\cite{schneider25}, and Jones Lang LaSalle (JLL) projects annual global
sales of MDCs to increase from approximately \$11 billion in 2025 to
\$48 billion by 2030 \cite{jll26}.

This deployment flexibility creates a potentially attractive pathway for
carbon-aware computing. MDCs can be located near renewable generation
that would otherwise be curtailed, allowing otherwise-wasted clean
energy to serve geographically flexible inference workloads. This
benefit, however, is limited by the amount of curtailed renewable energy
available. Once an MDC's electricity demand exceeds this supply, it must
procure additional electricity from the power market. The system-wide
emission effect therefore depends not only on the curtailed renewable
energy utilized by the MDC, but also on how its additional electricity
demand changes generation and electricity procurement elsewhere in the
market.

This interaction motivates the central question of this paper:
can carbon-aware migration of inference workloads to MDCs that utilize
otherwise-curtailed renewable energy deliver genuine system-wide emission
reductions?

To answer this question, we consider a hyperscaler that can process
inference workloads locally or lease computing capacity from geographically
distributed MDCs co-located with renewable generation that would otherwise
be curtailed. We jointly model capacity leasing between the hyperscaler and
MDCs, bilateral electricity procurement among generators and loads, and
network-constrained power-market equilibrium. We consider two
emission disclosure schemes. Under the \textit{ex post} scheme, MDC
emissions are disclosed after operation. Under the \textit{ex ante}
scheme, MDCs disclose their emission intensities before workload
allocation, allowing the hyperscaler to incorporate this information into
its capacity-leasing offers and workload decisions. This framework
determines both contract-attributed datacenter emissions and physical
system emissions within the same market equilibrium.

The contributions of this paper are threefold. First, we formulate the
interactions among conventional consumers, producers, the grid operator,
the hyperscaler, and MDCs as a complementarity problem, providing an
integrated framework for carbon-aware workload allocation, computing
capacity leasing, and bilateral electricity procurement. 

Second, we establish existence of a market equilibrium and uniqueness of
aggregate consumer demand and generator output under affine marginal
benefit and cost functions. We then characterize contract reshuffling by
showing that bilateral electricity allocations can vary while preserving
buyer-side procurement and generator output. Consequently,
contract-attributed emissions can decline without a corresponding change
in physical system emissions. 

Third, numerical results from an illustrative IEEE RTS-24 case study
demonstrate the practical implications of these theoretical findings.
Under the \textit{ex ante} scheme, workload-attributed emissions can
decline by about 17\% relative to the \textit{ex post} case, while the
corresponding reduction in total system emissions is less than 0.1\%.
This occurs even though the hyperscaler responds to carbon information
by differentiating its capacity-leasing offers across MDCs. When
conventional consumers retain larger portions of their bilateral
purchases through forward contracts, the scope for contract reshuffling
is reduced and larger physical emission reductions emerge. A notable network-level benefit also emerges under sufficiently
restrictive forward-contract positions. As the hyperscaler becomes more
cost-aware, it shifts additional workload to MDCs at the network edge,
which reduces system congestion. This effect is much less pronounced at
looser contract positions, where greater scope for reshuffling remains.

The remainder of the paper is organized as follows.
Section~\ref{sec:model} presents the optimization problems and market
equilibrium formulation. Section~\ref{sec:theory} establishes the
theoretical properties of the model and characterizes contract
reshuffling. Section~\ref{sec:case} presents the numerical case study,
and Section~\ref{sec:concl} concludes.

\section{Model} \label{sec:model}
This section presents the optimization problem faced by each entity.
The network consists of node $i \in I$. 
The nodes hosting conventional loads, the hyperscalers, and the MDCs are, respectively denoted by $I^d$, $I^\kappa$, and $I^\chi$, where the three sets are mutually exclusive. That is, $I^d \cap I^\kappa = I^d \cap I^\chi = I^\kappa \cap I^\chi = \emptyset$.

\subsection{Conventional Consumer’s Problem}
The consumers' willingness to pay in node $i$ at period $t$ is represented by an affine benefit function $B_{it}$.
The utility in node $i$ procures electricity on behalf of its customers in node $i$ by entering a power purchase agreement (PPA) with producers.
Its objective is given in (\ref{eq:consumer_obj}) where $d_{jhit}$ denotes the energy sales from the power plant $h$ located in $j$ to $i$, and where the first and second terms represent the total benefit and the payments, respectively. 
\begin{align}
\max_{d_{jhit}\ge 0} \quad & \sum_t B_i\big(\sum_{j,h \in H_j} d_{jhit}\big) - \sum_{j,h \in H_j,t} \theta^d_{jhit} d_{jhit} \label{eq:consumer_obj}
\end{align}

\subsection{Producer’s Problem}
Each generator $h\in H_j$ located at bus $j\in I$ solves the profit-maximization problem in \eqref{eq:producer_obj}--\eqref{eq:producer_cons1}. The decision variable $g_{jhit}$ denotes the quantity of electricity sold by generator $h$ at bus $j$ to buyer $i$ at time $t$, subject to the generator capacity constraint in \eqref{eq:producer_cons1}.
 Variables $\theta_{jhit}^d$, $\theta_{jhit}^\chi$ and $\theta_{jhit}^\kappa$ denote the bilateral prices with utilities, the hyperscalers and MDCs, respectively. 
Note that their values are exogenous to the producers' problem but endogenous to the overall market equilibrium model.
The first three summations in \eqref{eq:producer_obj} represent revenues from electricity sales to MDCs, hyperscalers, and conventional consumers, respectively. The producer's profit is given by these revenues minus the production cost $C_{jh}$ and the wheeling charges, represented by the nodal charge difference $(\omega_{it}-\omega_{jt})$ associated with delivering electricity from bus $j$ to bus $i$.\footnote{For a detailed discussion on wheeling charges, please refer to \cite{hobbs01}.}
\begingroup
\setlength{\mathindent}{3pt}
\begin{align}
\max_{\{g_{jhit}\geq 0\}_{i\in I,\,t\in T}} \quad
& \sum_{t\in T}\Bigg[
    \sum_{i\in I^\chi} g_{jhit}\theta^\chi_{jhit}
    + \sum_{i\in I^\kappa} g_{jhit}\theta^\kappa_{jhit}
    \nonumber\\
&\qquad
    + \sum_{i\in I^d} g_{jhit}\theta^d_{jhit}
    - C_{jh}\!\left(\sum_{i\in I}g_{jhit}\right)
    \nonumber\\
&\qquad
    - \sum_{i\in I}
    (\omega_{it}-\omega_{jt})g_{jhit}
    \Bigg]
\label{eq:producer_obj}\\
\text{s.t.}\quad
& \sum_{i\in I}g_{jhit}\leq G_{jh}
\quad(\lambda_{jht}),
\qquad \forall t\in T .
\label{eq:producer_cons1}
\end{align}
\endgroup

\subsection{The Grid Operator}
The grid operator aims to maximize the value of the transmission network by determining the net injection or withdrawal $y_{it}$ where $\omega_{it}$ gives the wheeling charge to bring power from the hub to node $i$ at time $t$.  
The grid operator is subject to the energy balance conditions in (\ref{eq:iso-energy}) 
and to the lower and upper transmission flow limits in 
(\ref{eq:iso-flow1}) and (\ref{eq:iso-flow2}), respectively, where PTDF 
refers to the Power Transfer Distribution Factors derived from the 
linearized DC power flow model~\cite{schweppe88}.
A similar formulation has been used elsewhere, e.g., Metzler et al. \cite{metzler03}.

\begin{align}
\hspace*{-150pt}
\max_{y_{it}\ \mathrm{free}} \quad & \sum_{i,t} \omega_{it} y_{it} \\
\text{s.t.}\quad & \sum_i y_{it}=0 \quad (\gamma_t) \quad \forall t\in T \label{eq:iso-energy}\\
& -F_k - \sum_{i \in I} \text{PTDF}_{ki} y_{it} \le 0 \ (\mu^1_{kt}), \ \forall k\in K,\ t\in T\label{eq:iso-flow1}\\
& \sum_{i \in I} \text{PTDF}_{ki} y_{it} - F_k \le 0 \ (\mu^2_{kt}),  \ \forall k\in K,\ t\in T \label{eq:iso-flow2}.
\end{align}

\subsection{Modular datacenter’s Problem}

MDCs are assumed to be strategically co-located at the edge of the network where curtailment constantly occurs. 
The quantities of expected ``curtailed'' renewables are denoted by $\sum_{h \in H^w_i} g^c_{iht}$. 
An MDC $i\in I^\chi$ maximizes its profits by deciding on power purchase contracts  $p_{jhit}$, leasing contracts with hyperscalers $k^{r}_{bit}$, and spillover quantities $s_{it}$.
\textcolor{black}{Note that, as alluded to earlier, not every MDC can process all the batches; in fact, each MDC has a different latency threshold, which can be approximated by the distance between the hyperscaler and MDC \cite{{goonatilake12}}. 
}

The set of batches that can be processed by the MDCs located in node $i$ is denoted by $b \in B_i$, which is determined by the distance between the hyperscaler and the MDCs as well as processing time, which depends on the complexity of the requests characterized by the input and output tokens \cite{wilkins25}.
The set of variables controlled by the MDCs collectively is denoted as $\bm{\Phi}=\{p_{jhit}, k^r_{bit}, s_{it} \}$.

The maximization problem faced by the MDCs is in (\ref{eq:mdc_obj})--(\ref{eq:mdc_cons3}).  
The first term $\alpha_{bit}\, k^r_{bit}\, \nu$ in the objective function \eqref{eq:mdc_obj} describes the revenue from leasing GPUs to the hyperscaler, where $\alpha_{bit}$ is the revenue from leasing per unit of GPUs, $k^r_{bit}$ denotes the leasing capacity, and $\nu$ denotes the power rating per GPU.
$I^\chi$ represents the set of nodes with MDCs.
The second term represents the power purchase costs, where $p_{jhit}$ and 
$\theta^\chi_{iht}$ denote the quantity and price of the power purchase agreement with suppliers, respectively.

\begingroup
\setlength{\mathindent}{3pt}
\begin{align}
\max_{\bm{\Phi}_i \geq 0}\quad 
& \sum_{b,t} \alpha_{bit}\, k^r_{bit}\, \nu
- \sum_{j\in I}\sum_{h\in H_j}\sum_{t\in T}
p_{jhit}\,\theta^\chi_{jhit}
\label{eq:mdc_obj} \\[0.5em]
\text{s.t.}\quad 
& \sum_b k^r_{bit}
-\sum_{j\in I}\sum_{h\in H_j}p_{jhit}
+s_{it}
-\sum_{h\in H_i^\omega}g^c_{iht}
=0
\quad(\eta_{it}),
\qquad \forall t\in T,
\label{eq:mdc_cons1}\\[0.5em]
& \sum_b k^r_{bit}-\mathrm{Cap}_i
\leq 0
\quad(\rho_{it}),
\qquad \forall t\in T,
\label{eq:mdc_cons2}\\[0.5em]
& s_{it}-\sum_{h\in H_i}g^c_{iht}
\leq 0
\quad(\upsilon_{it}),
\qquad \forall t\in T.
\label{eq:mdc_cons3}
\end{align}
\endgroup
The problem is subject to three constraints. 
Equation~(\ref{eq:mdc_cons1}) imposes energy balance. 
Equation~(\ref{eq:mdc_cons2}) limits the energy consumed by the migrated 
batches to be less than or equal to the capacity of the MDCs. 
Equation~(\ref{eq:mdc_cons3}) requires that the spillover $s_{it}$ be less  than or equal to the amount of curtailed energy.
We next turn to the hyperscaler's problem.

\subsection{Hyperscale datacenter’s Problem}
The hyperscale datacenter’s problem is given in \eqref{eq:hyp_obj}--\eqref{eq:hyp_cons1}. The hyperscaler receives inference requests, aggregates them into batches $q_b$ for $b \in B$, and decides whether to process them locally using $\ell_{bjhit}$ or to send them to MDCs using $k^{s}_{bit}$, with decision variables collected in $\Theta := \{\ell_{bjhit}, k^{s}_{bit}\}$. The hyperscaler accounts for both processing costs and associated CO$_2$ emissions, as reflected in the objective \eqref{eq:hyp_obj}, where the term weighted by $\delta$ captures processing costs and the term weighted by $1-\delta$ captures CO$_2$ emissions. Specifically, $\sum_{b,i \in I^\kappa,t} \alpha_{bit} k^{s}_{bit} \nu$ and $\sum_{b,j,h \in H_j, i \in I^\kappa,t} \ell_{bjhit} \theta^\kappa_{ijht}$ represent local processing and external leasing costs, while $\sum_{j,h \in H_j, i \in I^\chi,t} p_{jhit} e_{jh}$ and $\sum_{b,j,h \in H_j, i \in I^\kappa,t} \ell_{bjhit} e_{jh}$ represent local and external CO$_2$ emissions, respectively. This formulation corresponds to an {\it ex post} emission disclosure regime for MDCs. The constraint in \eqref{eq:hyp_cons1} ensures that all requests are processed.
\begingroup
\setlength{\mathindent}{3pt}
\begin{align} 
\min_{\bm{\Theta} \geq 0}\quad & \delta\!\left(\sum_{b,i \in I^\chi,t}\alpha_{bit}k^s_{bit}\nu + \sum_{b,j,h \in H_j, i \in I^{\kappa},t} \ell_{bjhit}\, \theta^\kappa_{jhit}\right) \nonumber\\ 
 + (1-\delta) & \left(\sum_{b,j,h \in H_j, i \in I^{\kappa},t} \ell_{bjhit}\, e_{jh}+\hspace*{-5pt}\sum_{j,h \in H_j, i' \in I^\chi, t} p_{jhi't}\, e_{jh} \right) \label{eq:hyp_obj}\\ 
\text{s.t.}\quad & \sum_{i \in I^{\chi}} k^s_{bit} + \sum_{j \in I,\, h \in H_j,\, i \in I^\kappa,t} \ell_{bjhit} - q_b = 0 \quad (\psi_{bt}) \nonumber\\
& \quad  \forall b \in B,\, t \in T \label{eq:hyp_cons1}
\end{align}
\endgroup

\subsection{Market Clearing Conditions}
The following five sets of market clearing conditions, (\ref{eq:mk_1})--(\ref{eq:mk_5}) are included for all $j\in I$, $h \in H_j$, and $t \in T$. 
The first three conditions define the bilateral contract prices 
between suppliers and (i) consumers, (ii) hyperscalers, and (iii) MDCs. 
The fourth condition specifies the leasing prices between hyperscalers and MDCs, 
while the fifth condition determines the wheeling charge. \\[-6pt]
\begingroup
\setlength{\mathindent}{2em}
\begin{align}
&\theta^d_{jhit}\ \text{free},\hspace{0.1cm} d_{jhit} - g_{jhit} = 0, \ \forall  i \in I^d\label{eq:mk_1}\\
& \theta^\chi_{jhit}\ \text{free},\hspace{0.1cm} p_{jhit} - g_{jhit} = 0, \ \forall  i \in I^\chi \\
& \theta^\kappa_{jhit}\ \text{free},\hspace{0.1cm}\sum_b \ell_{bjhit} - g_{jhit} = 0, \ 
 \forall  i \in I^\kappa\label{eq:hyper_MC} \\
& \omega_{it}\ \text{free},\hspace{0.1cm}
y_{it}
=
\sum_{j\in I}\sum_{h\in H_j} g_{jhit}
-
\sum_{h\in H_i}\sum_{j\in I} g_{ihjt},
\quad
\forall i\in I
\label{eq:mk_5}\\
& \alpha_{bit}\ \text{free},  \hspace{0.1cm} k^s_{bit}-k^r_{bit} = 0, \hspace{0.1cm} \forall b \in B, i \in I^\kappa. \label{eq:mk_alpha}
\end{align}
\endgroup

\subsection{Equilibrium Models}
Given that the problems faced by all the entities are convex, a necessary and sufficient condition for an optimal solution to satisfy is given by the following first-order or Karush-Kuhn-Tucker (KKT) conditions, one for each variable. 

\noindent {\it{Consumers}}
\begin{multline}
\hspace*{-20pt}
0 \le d_{jhit} \perp\ -B'_i\Big(\sum_{j,h \in H_j} d_{jhit}\Big) + \theta^d_{jhit} \ge 0,\\
\forall j \in I,\ h \in H_j,\ i \in I^d,\ t \in T.
\label{eq:kkt_1}
\end{multline}

\noindent {\it{Producers}}
\begingroup
\setlength{\mathindent}{4pt}
\begin{align}
0 \le g_{jhit} & \perp\ -\sum_{m :\{\chi, \kappa, d\} }\theta^m_{jhit}+(\omega_{it}-\omega_{jt}) +C'_{jh}(\sum_{i'\in I}g_{jhi't}) \nonumber \\
& \quad +\lambda_{iht}\ge 0,  \quad \forall j \in I, h \in H_j, i \in I, t \in T \label{eq:kkt_g}\\
0 \le \lambda_{jht} & \perp\ G_{j,h} - \sum_{i\in I} g_{jhit} \ge 0, \quad \forall j \in I, h \in H_j, t \in T. \label{eq:kkt_lambda}
\end{align}
\endgroup

\noindent {\it{The Grid Operator}}\\
For $\forall i \in I, t \in T$:
\begin{align}
\hspace*{-10pt}
& y_{it} \ \text{free,} && \omega_{it} + \sum_{k \in K} \text{PTDF}_{ki}(\mu^1_{kt}-\mu^2_{kt}) + \gamma_t = 0 \label{eq:kkt_yfree}
\\[0.25em]
& \gamma_{t} \ \text{free,} && \sum_{i\in I} y_{it} = 0 \label{eq:sumy_0}
\end{align}
For $\forall k \in K, t \in T$:
\begin{align}
& 0 \leq \mu^1_{kt} & \perp & -F_k - \sum_{i \in I} \text{PTDF}_{ki} y_{it} \le 0 \label{eq:kkt_TCap}
\\[0.25em]
& 0 \leq \mu^2 _{kt} & \perp & \sum_{i \in I} \text{PTDF}_{ki} y_{it} - F_k \le 0
\label{eq:kkt_iso}
\end{align}

\noindent {\it{Modular datacenter}}
\begin{align}
& 0 \leq p_{jhit} 
  && \perp && -\theta^\chi_{jhit} - \eta_{it} \leq 0, \nonumber\\
  & && && \quad \forall j \in I, h \in H_j, i \in I, t \in T \label{eq:Modular_p}\\[0.25em]
& 0 \leq k^r_{bit} 
  && \perp && \alpha_{bit}\nu + \eta_{it} - \rho_{it} \leq 0 \nonumber \\
  & && && \quad \forall b \in B, i \in I, t \in T.\label{eq:Modular_k}
\end{align}

For $\forall i \in I, t \in T$:
\begin{multline}
\eta_{it} \ \text{free}, \quad
\sum_{b\in B} k^r_{bit} - \sum_{j \in I, h\in H_j} p_{jhit} 
       + s_{it} - \sum_{h\in H_i^\omega} g^c_{iht} = 0  \label{eq:mdc_energy_balance}
\end{multline}       

\begin{align}       
& 0 \leq \rho_{it} 
  && \perp && \sum_{b\in B} k^r_{bit} - \mathrm{Cap}_i \leq 0 \label{eq:kkt_modular_cap}\\[1em]
& 0 \leq s_{it} 
  && \perp && \eta_{it}-\upsilon_{it}  \leq 0 \label{eq:kkt_modular_etav}\\ 
& 0 \leq \upsilon_{it} 
  && \perp && s_{it} - \sum_{h\in H_i} g^c_{iht} \leq 0 \label{eq:kkt_s_gc}
\end{align}

\noindent {\it{Hyperscalers}}
\begin{align}       
 & 0 \leq \kappa^s_{bit} 
   && \perp && -\delta \nu \alpha_{bit} - \psi_{bt}  \leq 0, \nonumber\\
  & && &&  \forall b \in B, i \in I^\kappa, t \in T  \label{eq:kkt_hyp1}
\end{align}
\vspace{-0.5em}
\begin{align}       
& 0 \leq \ell_{bjhit} 
  && \perp && -\delta\theta^\kappa_{jhit}-(1-\delta)e_{jh}- \psi_{bt}\leq 0 \nonumber \\
 & && && \forall b \in B, i \in I^\kappa, t \in T \label{eq:kkt_l}
\end{align}

\begin{align}  
& \psi_{bt}\ \text{free}, \quad \sum_{i \in I^{\kappa}} k^s_{bit} + \sum_{j \in I,\, h \in H_j,\, i \in I^\kappa} \ell_{bjhit} - q_b = 0  \nonumber\\
& \quad \hspace{5cm} \forall b \in B,\, t \in T \label{eq:kkt_N}
\end{align}
The market equilibrium problem is then defined as the collection of all KKT conditions~(\ref{eq:kkt_1})--(\ref{eq:kkt_N}), together with the market clearing conditions~(\ref{eq:mk_1})--(\ref{eq:mk_alpha}).
The resulting problem can then be solved using complementarity problem solvers such as PATH~\cite{path00}, Knitro~\cite{knitro25}, and Gurobi~\cite{gurobi25}.

\subsection{{\it{Ex Ante} Emission Disclosure}}
Alternatively, a hyperscaler may require each MDC to disclose its procurement mix and associated emissions before the hyperscaler allocates workload.
The benefit of this arrangement is that a hyperscaler can directly reduce
the emissions attributed to its workload through the capacity-leasing
agreement.
Under this situation, a hyperscaler tends to provide a more competitive leasing offer to those MDCs with a lower emissions intensity,  leading to a higher $\alpha_{bit}$.

To study this situation, we modify the aforementioned model by adding a new market clearing condition that calculates the MDCs' emission intensity as follows:
\begin{align}       
& \epsilon_{it} = \frac{\sum_{j \in I, h \in H_j} p_{jhit}e_{jh}}{\sum_{b \in B} k^{r}_{bit}}, \quad \forall i \in I^x, t \in T. \label{eq:MDC_em}
\end{align}
We then replace the first term within the parenthesis preceded by 
$(1-\delta)$ in (\ref{eq:hyp_obj}) with 
$\sum_{j \in I^\chi} \epsilon_{it} \kappa^{s}_{bit}$.
This yields the following condition for $k^\kappa_{bit}$.
\begin{align}       
 & 0 \leq \kappa^s_{bit} 
   && \perp && - \delta\nu\alpha_{bit} - (1-\delta)\epsilon_{it}-\psi_{bt}  \leq 0, \nonumber\\
  & && &&  \forall b \in B, i \in I^\kappa, t \in T \label{eq:hy_rate}
\end{align}
Note that the term $(1-\delta)\epsilon_{it}$ in \eqref{eq:hy_rate}
makes the optimality condition location-dependent.
As $\epsilon_{it}$ differs across $i$, the equilibrium leasing price
$\alpha_{bit}$ can vary across MDCs, unlike in \eqref{eq:kkt_hyp1}.
We illustrate this observation in the numerical case study in Section \ref{sec:case} below.

\section{Theoretical Analysis}\label{sec:theory}
This section reformulates the equilibrium conditions as a single mixed linear complementarity problem (MLCP), establishes existence, and proves uniqueness of certain aggregate quantities using monotonicity of the associated operator.

\subsection{MLCP Formulation}
\label{subsec:affine}
To express the MLCP in a compact matrix form, we require explicit functional forms for the consumer benefit functions $B_i(\cdot)$ in \eqref{eq:consumer_obj} and the power producer cost functions $C_{jh}(\cdot)$ in \eqref{eq:producer_cons1}.
We hence impose the following standing assumption: 
\begin{assumption}[Affine marginal benefit and marginal cost]
\label{ass:affine}
For each $i \in I^d$ and $t \in T$, the marginal benefit is affine in the total consumption,
\begin{equation}\label{eq:Marginal_Benefit}
B'_i\!\left(\sum_{j \in I,\, h \in H_j} d_{jhit}\right)
= b^0_{it} - \sum_{j \in I,\, h \in H_j} b^1_{it}\, d_{jhit},
\qquad b^1_{it} \ge 0.
\end{equation}
For each $j\in I$, $h\in H_j$, and $t\in T$, the marginal
production cost is affine in the producer's total output,
\begin{equation}
\label{eq:Marginal_Cost}
C'_{jh}\!\left(\sum_{i\in I} g_{jhit}\right)
=
c^0_{jh}
+
c^1_{jh}\sum_{i\in I} g_{jhit},
\qquad
c^1_{jh}\ge 0.
\end{equation}
\end{assumption}
An affine marginal benefit function with $b^1_{it} \ge 0$ implies a concave quadratic total benefit function and, equivalently, a downward sloping inverse demand function. Empirical evidence suggests that aggregate electricity demand response to price fluctuations is well-approximated by linear specifications over relevant operating ranges \cite{DRCurve}.

On the supply side, an affine marginal cost (with $c^1_{jh} \ge 0$) implies a convex quadratic total cost function. In power system modeling, the heat rate curve of thermal generators, including coal, gas, and oil units, is commonly represented as a quadratic function of output, which directly yields a linear marginal cost. This modeling practice is well documented in classic power system references such as \cite{WoodWollenberg}.

We next define the decision vectors used in this formulation. Let the nonnegative variable vector be defined as
\begin{equation}
\label{eq:z_def}
z := \big( d,\ g,\ \lambda,\ \mu^1,\ \mu^2,\ p,\ k^r,\ s,\ \rho,\ \upsilon,\ k^s,\ \ell \big)
\in \mathbb{R}^{n_z}_{+},
\end{equation}
where each block stacks the corresponding components
$\{d_{jhit}\}$,
$\{g_{jhit}\}$,
$\{\lambda_{jht}\}$,
$\{\mu^1_{kt}\}$,
$\{\mu^2_{kt}\}$,
$\{p_{jhit}\}$,
$\{k^r_{bit}\}$,
$\{s_{it}\}$,
$\{\rho_{it}\}$,
$\{\upsilon_{it}\}$,
$\{k^s_{bit}\}$,
and
$\{\ell_{bjhit}\}$,
over their respective index sets defined in
Section~\ref{sec:model}.

Similarly, define the vector of free variables as
\begin{equation}
\label{eq:pi_def}
\pi := \big( \theta^d,\ \theta^\chi,\ \theta^\kappa,\ \omega,\ \alpha,\ y,\ \gamma,\ \eta,\ \psi \big)
\in \mathbb{R}^{n_\pi},
\end{equation}
stacking
$\{\theta^d_{jhit}\}_{i\in I^d}$,
$\{\theta^\chi_{jhit}\}_{i\in I^\chi}$,
$\{\theta^\kappa_{jhit}\}_{i\in I^\kappa}$,
$\{\omega_{it}\}$,
$\{\alpha_{bit}\}$,
$\{y_{it}\}$,
$\{\gamma_t\}$,
$\{\eta_{it}\}$,
and
$\{\psi_{bt}\}$.

With the above notation and Assumption~\ref{ass:affine}, all
complementarity and equality conditions in
\eqref{eq:kkt_1}--\eqref{eq:kkt_N} can be stacked into the following
MLCP:
\begin{subequations}
\label{eq:micp}
\begin{align}
0 \le z
&\perp Mz+N\pi+q \ge 0,
\label{eq:micp_comp}\\
& \quad \ Pz+Q\pi
=r,
\label{eq:micp_eq}
\end{align}
\end{subequations}
where
$M\in\mathbb{R}^{n_z\times n_z}$,
$N\in\mathbb{R}^{n_z\times n_\pi}$,
$P\in\mathbb{R}^{n_\pi\times n_z}$,
$Q\in\mathbb{R}^{n_\pi\times n_\pi}$,
$q\in\mathbb{R}^{n_z}$, and
$r\in\mathbb{R}^{n_\pi}$.

To present the details of the matrices and vectors, we first define
aggregate demand and generation quantities. For each $i\in I^d$ and
$t\in T$, let
\begin{equation}\label{eq:aggD}
D_{it}:=\sum_{j\in I}\sum_{h\in H_j} d_{jhit}
\end{equation}
denote the aggregate demand at bus $i$ and time $t$. Let $D$ denote
the vector stacking $D_{it}$ over all $i\in I^d$ and $t\in T$, and let
$b^1$ denote the corresponding vector stacking $b^1_{it}$. The
slope-dependent terms in the marginal benefit functions can then be
written compactly as the Hadamard product $b^1\circ D$. Further let
$
n_G:=\sum_{j\in I}|H_j|
$
denote the total number of generators. We use
$\operatorname{diag}(A_1,\ldots,A_n)$ to denote the block-diagonal
matrix with matrix blocks $A_1,\ldots,A_n$. Define
\begin{equation}
\label{eq:Hd}
H_d :=
\operatorname{diag}\left(
\left\{
b^1_{it}\mathbf{1}_{n_G}\mathbf{1}_{n_G}^{\top}
\right\}_{i\in I^d,\;t\in T}
\right).
\end{equation}
For each $(i,t)$, the corresponding block maps the bilateral demand
quantities $\{d_{jhit}\}_{j\in I,h\in H_j}$ into the common marginal
benefit term $b^1_{it}D_{it}$. Since $b^1_{it}\geq 0$, each block matrix is
positive semidefinite (PSD), and hence $H_d$ is PSD.

Similarly, for each $j\in I$, $h\in H_j$, and $t\in T$, define the
aggregate output of generator $h$ at bus $j$ as
\begin{equation}\label{eq:aggG}
G_{jht}:=\sum_{i\in I} g_{jhit}.
\end{equation}
Let $G$ denote the vector stacking $G_{jht}$ over all $j\in I$,
$h\in H_j$, and $t\in T$, and let $c^1$ denote the corresponding
vector of marginal-cost coefficients $c^1_{jh}$, replicated over
$t\in T$. 
Define
\begin{equation}
\label{eq:Hg}
H_g :=
\operatorname{diag}\left(
\left\{
c^1_{jh}\mathbf{1}_{|I|}\mathbf{1}_{|I|}^{\top}
\right\}_{j\in I,\;h\in H_j,\;t\in T}
\right).
\end{equation}
For each $(j,h,t)$, the corresponding block maps the bilateral
generation quantities $\{g_{jhit}\}_{i\in I}$ into the common marginal
cost term $c^1_{jh}G_{jht}$. Since $c^1_{jh}\geq 0$, each block is
PSD, and hence so is $H_g$.

Next, define the matrix
$A_\lambda \in
\mathbb{R}^{|\mathcal{I}_\lambda|\times|\mathcal{I}_g|}$,
where
$
\mathcal{I}_\lambda
:=
\{(j,h,t):j\in I,\ h\in H_j,\ t\in T\}
$
and
$
\mathcal{I}_g
:=
\{(j,h,i,t):j\in I,\ h\in H_j,\ i\in I,\ t\in T\}.
$ 
For any row index $(j,h,t)\in\mathcal{I}_\lambda$ and column
index $(j',h',i',t')\in\mathcal{I}_g$, define
\begin{equation}
\label{eq:Alambda}
\big(A_\lambda\big)_{(j,h,t),(j',h',i',t')}
=
\begin{cases}
1, & \text{if } j=j',\ h=h',\ t=t',\\
0, & \text{otherwise}.
\end{cases}
\end{equation}
Thus, $A_\lambda$ aggregates the bilateral sales of each generator
across all buyers, so that
$
\big(A_\lambda g\big)_{jht}
=
\sum_{i\in I} g_{jhit}.
$
The blocks of $M$ associated with $d$, $g$, and $\lambda$ are then 
\begin{equation}
\label{eq:M_dglambda}
\hspace*{50pt}
\begin{bmatrix}
H_d & 0 & 0\\
0 & H_g & A_\lambda^\top\\
0 & -A_\lambda & 0
\end{bmatrix}.
\end{equation}

The remaining nonzero blocks of $M$ arise from complementary
primal--dual pairs in the MDC problem. In particular, let $A_\rho$
aggregate $k^r_{bit}$ over batches $b$ for each $(i,t)$, and let
$A_\upsilon$ map $s_{it}$ to its corresponding spillover constraint.
The additional nonzero blocks are
\[
M_{k^r,\rho}=A_\rho^\top,\qquad
M_{\rho,k^r}=-A_\rho,
\]
and
\[
M_{s,\upsilon}=A_\upsilon^\top,\qquad
M_{\upsilon,s}=-A_\upsilon.
\]
All other blocks of $M$ are zero.

For the remaining terms in \eqref{eq:micp_comp} and \eqref{eq:micp_eq}, the matrices $N$, $P$, and $Q$ collect the coefficients associated with the interactions between the nonnegative variables $z$ and the free variables $\pi$, while $q$ and $r$ collect the corresponding constant terms. Their detailed constructions are provided in Appendix~\ref{sec:AppA}.

\subsection{MLCP Solution Existence}
\label{subsec:exist}

A pair $(z^\star,\pi^\star)$ satisfying \eqref{eq:micp_comp} and
\eqref{eq:micp_eq} is called a solution of the MLCP, which corresponds to a market equilibrium. 
To facilitate the equilibrium existence proof, we construct an equivalent MLCP formulation.

Since $y$ is unrestricted in sign, we write
$
y=y^+-y^-$, with $y^+,\ y^-\geq0$.
For each $i\in I$ and $t\in T$, define
\[
\zeta_{it}
:=
\omega_{it}
+\sum_{k\in K}\mathrm{PTDF}_{ki}
\left(\mu^1_{kt}-\mu^2_{kt}\right)
+\gamma_t.
\]
The equality constraint in \eqref{eq:kkt_yfree} can then be equivalently represented by
\begin{equation}
\label{eq:y_split}
0\leq y^+_{it}\perp - \zeta_{it}\geq0,
\qquad
0\leq y^-_{it}\perp \zeta_{it}\geq0.
\end{equation}

Furthermore, since $\delta>0$, the complementarity residuals in
\eqref{eq:kkt_hyp1} and \eqref{eq:kkt_l} can be divided by $\delta$
without changing their solution sets. Together with the
market-clearing convention in \eqref{eq:mk_5}, and scaling equality
conditions by nonzero constants where necessary, these transformations
yield an equivalent MLCP of the form
\begin{subequations}
\label{eq:micp_hat}
\begin{align}
0\leq\widehat z
&\perp
\widehat M\widehat z
+\widehat N\widehat\pi
+\widehat q
\geq0,
\label{eq:micp_hat_comp}\\
& \quad 
-\widehat N^\top\widehat z+\widehat r
\ =\ 0.
\label{eq:micp_hat_eq}
\end{align}
\end{subequations}
Here, $\widehat z$ is obtained by augmenting $z$ with $y^+$ and
$y^-$, while $\widehat\pi$ is obtained from $\pi$ by removing $y$.
The vectors $\widehat q$ and $\widehat r$ contain the constant terms after the above
row scaling. 
Thus, the transformed system satisfies
$
\widehat P=-\widehat N^\top,
\qquad
\widehat Q=0,
$
and is an MLCP of the form
$
MLCP\left(
\widehat M,\widehat N,\widehat q,
-\widehat N^\top,0,\widehat r
\right).
$
\begin{assumption}[MLCP feasibility]
\label{ass:feas}
The linear constraints of the equivalent MLCP
\eqref{eq:micp_hat} are feasible; that is, there exist
$\widehat z\geq0$ and $\widehat\pi$ such that
\begin{equation}
\label{eq:mlcp_feas}
\widehat M\widehat z
+\widehat N\widehat\pi
+\widehat q
\geq0,
\qquad
-\widehat N^\top\widehat z+\widehat r=0.
\end{equation}
\end{assumption}
Assumption~\ref{ass:feas} requires only feasibility of the linear equalities and inequalities in \eqref{eq:mlcp_feas}, not the complementarity conditions, and therefore does not assume the existence of an equilibrium. In the present model, conventional electricity demand is price responsive and can adjust downward when generation or transmission capacity becomes scarce. The main potential source of infeasibility is therefore the fixed data center batch-processing requirement $q_b$ in \eqref{eq:kkt_N}, which must be satisfied regardless of market prices or conditions. The assumption can be verified by solving a linear feasibility problem.
\begin{theorem}[Existence of equilibrium]
\label{thm:exist}
Under Assumptions~\ref{ass:affine} and \ref{ass:feas}, and for
$\delta>0$, the MLCP \eqref{eq:micp} admits at least one solution
$(z^\star,\pi^\star)$.
\end{theorem}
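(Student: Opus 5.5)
The plan is to turn the equivalent MLCP \eqref{eq:micp_hat} into a standard linear complementarity problem with a copositive-plus matrix, and then apply the classical existence theorem: if an LCP with a copositive-plus matrix is feasible, Lemke's algorithm terminates at a solution (Cottle--Pang--Stone, Thm.~3.8.6). Since $\widehat P=-\widehat N^\top$ and $\widehat Q=0$, the whole system is the KKT system of the pair $0\le \widehat z\perp \widehat M\widehat z+\widehat N\widehat\pi+\widehat q\ge0$ and $\widehat N^\top\widehat z=\widehat r$. I would split the free vector as $\widehat\pi=\widehat\pi^+-\widehat\pi^-$ with $\widehat\pi^\pm\ge0$, and replace the equality by the two complementarity pairs
\[
0\le\widehat\pi^+\perp -\widehat N^\top\widehat z+\widehat r\ge0,\qquad 0\le\widehat\pi^-\perp \widehat N^\top\widehat z-\widehat r\ge0 .
\]
This gives an LCP in $w=(\widehat z,\widehat\pi^+,\widehat\pi^-)$ with matrix
\[
\mathcal M=\begin{bmatrix}\widehat M & \widehat N & -\widehat N\\ -\widehat N^\top & 0 & 0\\ \widehat N^\top & 0&0\end{bmatrix}.
\]
Every solution of this LCP yields a solution of \eqref{eq:micp_hat}, because the two inequalities force $\widehat N^\top\widehat z=\widehat r$. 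Undoing the splitting $y=y^+-y^-$ and the row scalings then recovers a solution of \eqref{eq:micp}.

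The key structural step is to show that $\widehat M$ is positive semidefinite. Its only nonzero symmetric parts are $H_d$ and $H_g$, both PSD by Assumption~\ref{ass:affine}. The remaining blocks of $M$, namely $(A_\lambda^\top,-A_\lambda)$, $(A_\rho^\top,-A_\rho)$ and $(A_\upsilon^\top,-A_\upsilon)$, are skew-symmetric pairs, so they contribute nothing to $x^\top\widehat Mx$. The new rows for $y^\pm$ have zero diagonal block, since $\zeta$ depends only on $\pi$ and on $\mu$. Here I have to check that the cross terms coupling $y^\pm$ with $\mu^{1},\mu^2$ (the PTDF entries in $\zeta$ and in \eqref{eq:kkt_TCap}--\eqref{eq:kkt_iso}) appear with opposite signs, so that they also form a skew pair; this is where the sign conventions in Appendix~\ref{sec:AppA} must be verified. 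The off-diagonal part of $\mathcal M$ built from $\pm\widehat N$ is skew-symmetric by construction. Hence $w^\top\mathcal M w=\widehat z^\top\widehat M\widehat z\ge0$, so $\mathcal M$ is PSD, and every PSD matrix is copositive-plus.

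Next, feasibility of the LCP follows directly from Assumption~\ref{ass:feas}. Take a feasible pair $(\widehat z,\widehat\pi)$ and set $\widehat\pi^\pm=\max(\pm\widehat\pi,0)$. The first block row is then nonnegative, and the last two block rows equal zero. Lemke's theorem therefore gives an LCP solution, and hence an equilibrium $(z^\star,\pi^\star)$. The restriction $\delta>0$ enters only through the rescaling of \eqref{eq:kkt_hyp1}--\eqref{eq:kkt_l}, which is what makes $\widehat P=-\widehat N^\top$ hold exactly. Without it, the $\alpha$ and $\theta^\kappa$ columns would not mirror the market-clearing rows and the skew structure would fail.

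I expect the main obstacle to be the bookkeeping that verifies exact skew-symmetry, $\widehat P=-\widehat N^\top$, together with the claim that $\widehat M+\widehat M^\top$ has only the $H_d,H_g$ blocks. This needs entry-by-entry checking of how $\omega$, $\theta^m$, $\alpha$, $\eta$ and $\psi$ enter both the complementarity rows and the clearing rows, including the $\omega_{it}-\omega_{jt}$ wheeling term against \eqref{eq:mk_5} and the PTDF terms. If some pair fails to be exactly skew after scaling, my fallback is to rescale that free variable (for example replacing $\alpha$ by $\delta\nu\alpha$) so that the coefficients match.
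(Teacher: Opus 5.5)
Your proposal is correct and is essentially the paper's proof. The paper applies an MLCP existence theorem for feasible systems of the form \eqref{eq:micp_hat} with $\widehat M$ positive semidefinite, using the same skew-pairing argument to get $(\widehat M+\widehat M^\top)/2=\operatorname{diag}(H_d,H_g,0,\ldots,0)$; you instead unpack that theorem by splitting $\widehat\pi$ into a PSD (hence copositive-plus) standard LCP and invoking the feasible-implies-solvable result. The sign checks you flagged do go through: the PTDF couplings between $y^\pm$ and $\mu^1,\mu^2$ form exact skew pairs, and dividing \eqref{eq:kkt_hyp1}--\eqref{eq:kkt_l} by $\delta$ makes the $\alpha$, $\theta^\kappa$ and $\psi$ columns mirror their clearing rows. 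Your fallback is therefore unnecessary, and it would not have worked anyway: rescaling a single free variable multiplies its whole column by one factor, so it cannot repair a ratio mismatch between two rows.
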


\begin{proof}
By the MLCP existence result in \cite[Theorem 1.3.23]{Metzler}, a feasible
problem of the form in \eqref{eq:micp_hat}
admits a solution if $\widehat M$ is PSD.
Therefore, it remains to verify feasibility and positive
semidefiniteness of $\widehat M$.

Feasibility follows directly from Assumption~\ref{ass:feas}. Moreover,
by construction, all off-diagonal blocks of $\widehat M$ occur in
skew-symmetric pairs. Hence,
\begin{equation}
\label{eq:Mhat_sym}
\frac{\widehat M+\widehat M^\top}{2}
=
\operatorname{diag}\left(H_d,H_g,0,\ldots,0\right).
\end{equation}
Since $H_d$ and $H_g$ are PSD, so is $\widehat M$.  Thus, \eqref{eq:micp_hat} admits a solution.

Finally, since \eqref{eq:micp_hat} is equivalent to
\eqref{eq:micp}, a solution of the former induces a solution
$(z^\star,\pi^\star)$ of the latter.
\end{proof}

\subsection{MLCP Solution Uniqueness}
\label{sec:unique}
The preceding result establishes the existence of a market equilibrium, which need not be unique in general. Nevertheless, under linear inverse demand and marginal cost functions, the special MLCP structure introduced above allows us to establish uniqueness of certain aggregate quantities. 
\begin{theorem}[Uniqueness of aggregate quantities]
\label{thm:unique}
Under Assumptions~\ref{ass:affine} and \ref{ass:feas}, suppose
$\delta>0$. Further assume that 
$b^1_{it}>0$ in \eqref{eq:Marginal_Benefit} for all
$i\in I^d$, $t\in T$, and $c^1_{jh}>0$ in
\eqref{eq:Marginal_Cost} for all $j\in I$, $h\in H_j$.
Then, for any two market equilibria $(z^1,\pi^1)$ and
$(z^2,\pi^2)$, the aggregate demand $D_{it}$ defined in
\eqref{eq:aggD} and aggregate generator output $G_{jht}$ defined in
\eqref{eq:aggG} satisfy
\[
D^1_{it}=D^2_{it},
\qquad
\forall i\in I^d,\ t\in T,
\]
and
\[
G^1_{jht}=G^2_{jht},
\qquad
\forall j\in I,\ h\in H_j,\ t\in T.
\]
Hence, aggregate consumer demand and generator output are unique
across market equilibria.
\end{theorem}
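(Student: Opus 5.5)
We will use the standard monotonicity argument for affine complementarity problems, working with the equivalent form \eqref{eq:micp_hat}. Let $(\widehat z^1,\widehat\pi^1)$ and $(\widehat z^2,\widehat\pi^2)$ be the transformed versions of two equilibria. Write $w^k:=\widehat M\widehat z^k+\widehat N\widehat\pi^k+\widehat q\ge 0$ for $k=1,2$.

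The core step is the inequality $(\widehat z^1-\widehat z^2)^\top(w^1-w^2)\le 0$. Complementarity gives $(\widehat z^k)^\top w^k=0$, while nonnegativity gives $(\widehat z^1)^\top w^2\ge0$ and $(\widehat z^2)^\top w^1\ge 0$. Expanding the product therefore yields
\[
(\widehat z^1-\widehat z^2)^\top(w^1-w^2)=-(\widehat z^1)^\top w^2-(\widehat z^2)^\top w^1\le 0 .
\]

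Next we expand the left side using the structure of $w^k$:
\[
(\widehat z^1-\widehat z^2)^\top\widehat M(\widehat z^1-\widehat z^2)+(\widehat z^1-\widehat z^2)^\top\widehat N(\widehat\pi^1-\widehat\pi^2).
\]
By \eqref{eq:micp_hat_eq}, $\widehat N^\top\widehat z^1=\widehat r=\widehat N^\top\widehat z^2$, so the cross term vanishes. This is exactly why the reformulation with $\widehat P=-\widehat N^\top$ and $\widehat Q=0$ was built. The key point to check is that the free variables, prices and $\gamma,\eta,\psi$ included, enter only through this skew coupling. The equality rows of \eqref{eq:micp_hat_eq} are the market-clearing, balance and batch constraints, and they are linear in $\widehat z$ alone. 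Confirming this from the appendix constructions is the step I expect to require the most care, especially the sign conventions and the scaling by $\delta$.

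Using \eqref{eq:Mhat_sym}, the quadratic form reduces to
\[
\Delta d^\top H_d\,\Delta d+\Delta g^\top H_g\,\Delta g=\sum_{i,t} b^1_{it}(\Delta D_{it})^2+\sum_{j,h,t}c^1_{jh}(\Delta G_{jht})^2 .
\]
Here $\Delta d:=d^1-d^2$ and $\Delta g:=g^1-g^2$. This identity follows because each block $b^1_{it}\mathbf 1\mathbf 1^\top$ produces $b^1_{it}(\mathbf 1^\top\Delta d_{\cdot it})^2$, and similarly for the $g$ blocks. The skew-symmetric off-diagonal blocks contribute nothing.

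The resulting sum of squares is both $\ge0$ and $\le 0$, so it equals zero. Since every $b^1_{it}>0$ and every $c^1_{jh}>0$, each squared term must vanish, giving $D^1_{it}=D^2_{it}$ and $G^1_{jht}=G^2_{jht}$. Finally, the augmentation by $y^\pm$ and the row scalings do not change the $d$ and $g$ components. The conclusion therefore transfers back to the original MLCP \eqref{eq:micp}.
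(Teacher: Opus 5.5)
Your proposal is correct and follows essentially the same route as the paper. The paper cites the MLCP uniqueness theorem \cite[Theorem 1.3.25]{Metzler} to get $(\widehat M+\widehat M^\top)\widehat z^1=(\widehat M+\widehat M^\top)\widehat z^2$, then reads off $b^1_{it}(D^1_{it}-D^2_{it})=0$ and $c^1_{jh}(G^1_{jht}-G^2_{jht})=0$ from \eqref{eq:Mhat_sym}; you instead reprove the underlying monotonicity inequality directly, with the cross term cancelled by $\widehat P=-\widehat N^\top$, $\widehat Q=0$, and finish with the explicit sum of squares, a self-contained version of the same argument in which the skew-coupling structure you flag is exactly what the paper asserts for \eqref{eq:micp_hat}.
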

\begin{proof}
Any equilibrium of \eqref{eq:micp} induces a solution of the
equivalent MLCP \eqref{eq:micp_hat}. By the MLCP uniqueness result
in \cite[Theorem 1.3.25]{Metzler}, for any two solutions
$(\widehat z^1,\widehat\pi^1)$ and
$(\widehat z^2,\widehat\pi^2)$ of a feasible MLCP of the form in \eqref{eq:micp_hat} 
with $\widehat M$ positive semidefinite,
\[
(\widehat M+\widehat M^\top)\widehat z^1
=
(\widehat M+\widehat M^\top)\widehat z^2.
\]
From \eqref{eq:Mhat_sym},
we have
$(\widehat M+\widehat M^\top)/2
=
\operatorname{diag}(H_d,H_g,0,\ldots,0),
$
and therefore
$
H_d d^1=H_d d^2$, and 
$H_g g^1=H_g g^2.
$

By the definitions of $H_d$ and $H_g$, these equalities imply
$
b^1_{it}(D^1_{it}-D^2_{it})=0
$
for each $(i,t)$ and
$
c^1_{jh}(G^1_{jht}-G^2_{jht})=0
$
for each $(j,h,t)$. Strict positivity of the corresponding slope
coefficients therefore yields uniqueness of $D_{it}$ and $G_{jht}$.
\end{proof}
\subsection{Contract Reshuffling and Additionality}
\label{subsec:reshuffling}

The uniqueness result above applies to aggregate physical quantities,
but does not generally determine bilateral contract allocations. To
characterize this remaining allocation freedom, define the aggregate
procurement of buyer $i$ at time $t$ as
$
G^{\mathrm B}_{it}
:=
\sum_{j\in I}\sum_{h\in H_j}g_{jhit}.
$ 
For fixed buyer-side procurement $G^{\mathrm B}$ and generator output
$G$, define
\begin{equation}
\label{eq:G_fiber}
\mathcal{G}(G^{\mathrm B},G)
:=
\left\{
\widetilde g\geq0:
\begin{aligned}
\sum_{j\in I}\sum_{h\in H_j}\widetilde g_{jhit}
&=G^{\mathrm B}_{it},\\
\sum_{i\in I}\widetilde g_{jhit}
&=G_{jht}
\end{aligned}
\right\},
\end{equation}
where the first equality holds for all $i\in I$, $t\in T$, and the
second for all $j\in I$, $h\in H_j$, $t\in T$.
Thus, $\mathcal{G}(G^{\mathrm B},G)$ is the set of bilateral
allocations consistent with the same buyer-side procurement and
generator output. Theorem~\ref{thm:unique} establishes uniqueness of
the generator output $G_{jht}$. For conventional consumers,
\eqref{eq:mk_1} implies $G^{\mathrm B}_{it}=D_{it}$, so their total
procurement is also unique. The theorem does not, however, determine
how these aggregate quantities are allocated across individual
buyer--generator contracts. We refer to variations
within $\mathcal{G}(G^{\mathrm B},G)$ as \textit{contract
reshuffling}.
\begin{proposition}[Nonuniqueness of bilateral allocations]
\label{prop:reshuffling}
Consider a market equilibrium and let $g$ denote its bilateral
generation allocation, with corresponding buyer-side procurement
$G^{\mathrm B}$ and generator output $G$. Fix $t\in T$. Suppose there
exist distinct buyers $i_1,i_2\in I$ and distinct generators
$(j_1,h_1)$ and $(j_2,h_2)$ such that
$
g_{j_1h_1i_2t}>0$ and 
$g_{j_2h_2i_1t}>0.
$
Then $\mathcal{G}(G^{\mathrm B},G)$ contains infinitely many
bilateral allocations. In particular, for any
\[
0<\varepsilon
\leq
\min\left\{
g_{j_1h_1i_2t},
g_{j_2h_2i_1t}
\right\},
\]
the allocation $\widetilde g=g+\Delta g$ belongs to
$\mathcal{G}(G^{\mathrm B},G)$, where
\begin{align}
\Delta g_{j_1h_1i_1t}&=\varepsilon,
&
\Delta g_{j_1h_1i_2t}&=-\varepsilon, \nonumber\\
\Delta g_{j_2h_2i_1t}&=-\varepsilon,
&
\Delta g_{j_2h_2i_2t}&=\varepsilon,
\label{eq:CR}
\end{align}
and all other components of $\Delta g$ are zero.
\end{proposition}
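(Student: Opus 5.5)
The plan is a direct verification: I will check that $\widetilde g=g+\Delta g$ satisfies the three defining conditions of $\mathcal{G}(G^{\mathrm B},G)$ in \eqref{eq:G_fiber}, namely nonnegativity, preservation of every buyer-side total $G^{\mathrm B}_{it}$, and preservation of every generator total $G_{jht}$. Infinitely many allocations will then follow because $\varepsilon$ ranges over a nondegenerate interval. No equilibrium property beyond the given $g\geq 0$ and the stated positivity is needed. Theorem~\ref{thm:unique} only supplies the interpretation, namely that $G$ is the physical output common to all equilibria.

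\emph{Nonnegativity.} Only four components change. The two components with $+\varepsilon$ remain nonnegative since $g\geq0$ and $\varepsilon>0$. The two with $-\varepsilon$, namely $g_{j_1h_1i_2t}-\varepsilon$ and $g_{j_2h_2i_1t}-\varepsilon$, are nonnegative exactly because $\varepsilon\leq\min\{g_{j_1h_1i_2t},g_{j_2h_2i_1t}\}$.

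\emph{Row and column sums.} For buyer $i_1$ at time $t$, the changed entries are $(j_1,h_1,i_1,t)$ with $+\varepsilon$ and $(j_2,h_2,i_1,t)$ with $-\varepsilon$, so $\sum_{j,h}\Delta g_{jhi_1t}=0$. Buyer $i_2$ is handled symmetrically, and every other buyer or time period is untouched. For generator $(j_1,h_1)$ at time $t$, the changes are $+\varepsilon$ at $i_1$ and $-\varepsilon$ at $i_2$, which sum to zero; generator $(j_2,h_2)$ is handled likewise. I need $i_1\neq i_2$ and $(j_1,h_1)\neq(j_2,h_2)$ so that the four index tuples are genuinely distinct, which ensures that the $\pm\varepsilon$ entries do not collide and cancel incorrectly. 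This is the only subtle point, and it is guaranteed by hypothesis.

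\emph{Infinitely many allocations.} Distinct values of $\varepsilon\in(0,\min\{\cdot\}]$ give distinct $\widetilde g$ because, for example, the $(j_1,h_1,i_1,t)$ component differs. Since the minimum is strictly positive, this interval is uncountable. I expect no real obstacle here. If anything requires care, it is the bookkeeping about which buyer-type market-clearing equations, such as \eqref{eq:mk_1}, are affected; but the proposition concerns membership in $\mathcal{G}$ only, not whether $\widetilde g$ is itself an equilibrium, so I will remark on this without proving it.
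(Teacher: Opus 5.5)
Your proposal is correct and follows the same route as the paper's proof: check that the $\pm\varepsilon$ changes cancel in every buyer and generator sum, use the bound on $\varepsilon$ for nonnegativity, and vary $\varepsilon$ over $(0,\min\{g_{j_1h_1i_2t},g_{j_2h_2i_1t}\}]$ to get infinitely many allocations. You are slightly more careful than the paper on two points, namely that the four index tuples are distinct and that distinct values of $\varepsilon$ give distinct $\widetilde g$.
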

\begin{proof}
For each buyer, the changes in \eqref{eq:CR} sum to zero, and the
same holds for each generator. Hence $\widetilde g$ has the same
$G^{\mathrm B}$ and $G$ as $g$. The bound on $\varepsilon$ ensures
$\widetilde g\geq0$, so
$\widetilde g\in\mathcal{G}(G^{\mathrm B},G)$. Since
$\varepsilon$ can take infinitely many positive values within the
stated interval, $\mathcal{G}(G^{\mathrm B},G)$ contains infinitely
many elements.
\end{proof}

Moreover, all allocations in $\mathcal{G}(G^{\mathrm B},G)$ induce
the same physical power system outcome. Under \eqref{eq:mk_5}, nodal
net injections depend on the bilateral allocations only through
buyer-side procurement and generator output. Hence nodal injections
and transmission flows are invariant over
$\mathcal{G}(G^{\mathrm B},G)$. Generation costs and total system
emissions are also invariant because $G_{jht}$ is fixed.

Proposition~\ref{prop:reshuffling} therefore implies that contractual
carbon attribution need not be uniquely determined by the physical
power system outcome.

\begin{corollary}\rm{\textbf{(Nonuniqueness of contract-attributed emissions).}}
\label{cor:reshuffling}
Let $i_1\in I^\kappa$ be a hyperscaler and define its
contract-attributed emissions as
\begin{equation}
\label{eq:E_attr}
E^{\mathrm{attr}}_{i_1t}(\widetilde g)
:=
\sum_{j\in I}\sum_{h\in H_j}
e_{jh}\widetilde g_{jhi_1t}.
\end{equation}
Under the conditions of Proposition~\ref{prop:reshuffling}, if
$e_{j_1h_1}\neq e_{j_2h_2}$, then
$E^{\mathrm{attr}}_{i_1t}(\widetilde g)$ is nonconstant over
$\mathcal{G}(G^{\mathrm B},G)$. In particular, if
$e_{j_1h_1}<e_{j_2h_2}$, the reshuffling in \eqref{eq:CR} decreases
the hyperscaler's contract-attributed emissions while leaving the
physical power system outcome and total system emissions unchanged.
\end{corollary}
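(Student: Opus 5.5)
The plan is to compute directly the change in $E^{\mathrm{attr}}_{i_1t}$ produced by the perturbation of Proposition~\ref{prop:reshuffling}, and then to invoke the invariance remarks made after that proposition for the physical outcome. Fix $t$, the buyers $i_1\in I^\kappa$ and $i_2$, and the generators $(j_1,h_1)$, $(j_2,h_2)$ as in the proposition. For any $0<\varepsilon\le\min\{g_{j_1h_1i_2t},g_{j_2h_2i_1t}\}$, Proposition~\ref{prop:reshuffling} gives $\widetilde g=g+\Delta g\in\mathcal{G}(G^{\mathrm B},G)$.

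Since $E^{\mathrm{attr}}_{i_1t}$ in \eqref{eq:E_attr} is linear in $\widetilde g$ and involves only the components with buyer index $i_1$, only two entries of $\Delta g$ in \eqref{eq:CR} contribute: $\Delta g_{j_1h_1i_1t}=\varepsilon$ and $\Delta g_{j_2h_2i_1t}=-\varepsilon$. Hence
\begin{equation*}
E^{\mathrm{attr}}_{i_1t}(g+\Delta g)-E^{\mathrm{attr}}_{i_1t}(g)=\varepsilon\,(e_{j_1h_1}-e_{j_2h_2}).
\end{equation*}
If $e_{j_1h_1}\neq e_{j_2h_2}$, this difference is nonzero for every admissible $\varepsilon>0$. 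The value of $E^{\mathrm{attr}}_{i_1t}$ therefore differs between $g$ and $\widetilde g$, both of which lie in $\mathcal{G}(G^{\mathrm B},G)$, so the function is nonconstant on that set. Indeed it takes a continuum of values as $\varepsilon$ ranges over the admissible interval. If $e_{j_1h_1}<e_{j_2h_2}$, the difference is strictly negative, which gives the claimed decrease.

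For the final clause, I would repeat the observation following Proposition~\ref{prop:reshuffling}. By \eqref{eq:mk_5}, each $y_{it}$ depends on $g$ only through the buyer sums $G^{\mathrm B}_{it}$ and the generator sums $G_{jht}$. Both are fixed on $\mathcal{G}(G^{\mathrm B},G)$, so nodal injections, the PTDF flows, generation costs and total emissions $\sum_{j,h}e_{jh}G_{jht}$ are all unchanged.

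The only point needing care is the role of the hypothesis $i_1\in I^\kappa$. The argument uses only $i_1\neq i_2$ and the indexing of $E^{\mathrm{attr}}$; the fact that $i_1$ is a hyperscaler matters only for the interpretation of the result. I would also state explicitly that the corollary concerns membership of $\widetilde g$ in $\mathcal{G}(G^{\mathrm B},G)$, not whether $\widetilde g$ is itself an equilibrium. I expect no genuine obstacle beyond this clarification.
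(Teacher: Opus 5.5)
Your proposal is correct and follows essentially the same route as the paper: you compute the change $\varepsilon(e_{j_1h_1}-e_{j_2h_2})$ in $E^{\mathrm{attr}}_{i_1t}$ under the perturbation \eqref{eq:CR}, and you invoke the invariance of nodal injections, flows, and total emissions over $\mathcal{G}(G^{\mathrm B},G)$ via \eqref{eq:mk_5}, exactly as the paper does. Your two added clarifications are accurate but not needed for the argument: that only $i_1\neq i_2$ is used, and that the claim concerns membership of $\widetilde g$ in $\mathcal{G}(G^{\mathrm B},G)$ rather than its equilibrium status.
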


This result follows directly from Proposition~\ref{prop:reshuffling}.
Under \eqref{eq:CR}, the change in the hyperscaler's
contract-attributed emissions is
$
\varepsilon e_{j_1h_1}
-\varepsilon e_{j_2h_2}
=
-\varepsilon
\left(e_{j_2h_2}-e_{j_1h_1}\right),
$
which is nonzero whenever $e_{j_1h_1}\neq e_{j_2h_2}$ and negative
when $e_{j_1h_1}<e_{j_2h_2}$.

The contract reshuffling characterized here does not imply that an
executed PPA can be unilaterally reassigned. Rather, it captures the
allocation freedom that may exist when contracts are formed, renewed,
or otherwise adjustable. Existing contractual obligations restrict
this freedom and thus reduce the admissible subset of
$\mathcal{G}(G^{\mathrm B},G)$. The result therefore concerns the
contractual attribution of available generation rather than the
modification of fixed PPAs.

Corollary~\ref{cor:reshuffling} shows that a buyer can reduce its
contract-attributed emissions without changing the physical generation
outcome or total system emissions. This raises the question of when a
claim of cleaner electricity procurement should correspond to an actual
increase in clean generation. A common criterion for making this
distinction is \textit{additionality}. Here, additionality means that
an increase in a buyer's qualifying clean electricity procurement is
accompanied by an increase in qualifying clean generation, rather than
solely by a reallocation of existing clean generation among buyers.  This definition leads directly to the following implication for
contract reshuffling.
\begin{corollary}\rm{\textbf{(Additionality and contract reshuffling).}}
\label{cor:additionality}
Let $i_1\in I^\kappa$ denote a hyperscaler and let $\mathcal{R}$ denote
the set of qualifying clean generators. For any
$g,\widetilde g\in\mathcal{G}(G^{\mathrm B},G)$, define
$\Delta g:=\widetilde g-g$, the change in the hyperscaler's qualifying
clean procurement as
$
\Delta R_{i_1t}
:=
\sum_{(j,h)\in\mathcal{R}}
\Delta g_{jhi_1t},
$ 
and the corresponding change in aggregate qualifying clean generation as
$
\Delta G^{\mathcal{R}}_t
:=
\sum_{(j,h)\in\mathcal{R}}
\sum_{i\in I}\Delta g_{jhit}.
$
If additionality requires
$$
\Delta G^{\mathcal{R}}_t
\geq
\Delta R_{i_1t}>0,
$$
then such an increase in the hyperscaler's qualifying clean procurement
cannot be achieved through contract reshuffling within
$\mathcal{G}(G^{\mathrm B},G)$.
\end{corollary}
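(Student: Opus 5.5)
The plan is to show that, for any two allocations in $\mathcal{G}(G^{\mathrm B},G)$, the change in aggregate qualifying clean generation is identically zero, $\Delta G^{\mathcal{R}}_t=0$. Once this is established, the additionality requirement $\Delta G^{\mathcal{R}}_t\geq\Delta R_{i_1t}>0$ would force $0>0$, which is impossible. So no strict increase in the hyperscaler's qualifying clean procurement that also meets additionality can come from moving within the set.

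The key step uses the second defining equality in \eqref{eq:G_fiber}. Take any $g,\widetilde g\in\mathcal{G}(G^{\mathrm B},G)$. Both satisfy $\sum_{i\in I}\widetilde g_{jhit}=G_{jht}=\sum_{i\in I}g_{jhit}$ for every generator $(j,h)$ and time $t$. Subtracting gives $\sum_{i\in I}\Delta g_{jhit}=0$ for each $(j,h,t)$.

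Summing this identity over $(j,h)\in\mathcal{R}$ and exchanging the finite sums yields
\[
\Delta G^{\mathcal{R}}_t=\sum_{(j,h)\in\mathcal{R}}\sum_{i\in I}\Delta g_{jhit}=\sum_{(j,h)\in\mathcal{R}}0=0.
\]
Note that the buyer-side constraint $G^{\mathrm B}$ is not needed here; the fixed generator output alone does the work. Theorem~\ref{thm:unique} is what justifies treating $G$ as common across equilibria, so the conclusion also covers reshuffling between distinct equilibria that share the same $G^{\mathrm B}$.

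Finally, suppose $\Delta R_{i_1t}>0$. Additionality would then require $\Delta G^{\mathcal{R}}_t\geq\Delta R_{i_1t}>0$, which contradicts $\Delta G^{\mathcal{R}}_t=0$. Hence any positive $\Delta R_{i_1t}$ achieved inside $\mathcal{G}(G^{\mathrm B},G)$ must be offset exactly by reductions in qualifying clean procurement by other buyers, $\sum_{i\neq i_1}\sum_{(j,h)\in\mathcal{R}}\Delta g_{jhit}=-\Delta R_{i_1t}$. This is precisely a reallocation of existing clean generation.

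There is no substantive obstacle. The only point needing care is stating clearly that the corollary is a statement about the fiber defined by fixed $G$, so that $\Delta G^{\mathcal{R}}_t=0$ holds by construction rather than by any equilibrium argument. A remark can tie this back to the exchange in \eqref{eq:CR} of Proposition~\ref{prop:reshuffling}, which is the canonical example.
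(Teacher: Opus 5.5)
Your proposal is correct and follows the paper's own argument: since every allocation in $\mathcal{G}(G^{\mathrm B},G)$ preserves each generator's output, $\sum_{i\in I}\Delta g_{jhit}=0$ for all $(j,h,t)$. Summing over $\mathcal{R}$ then gives $\Delta G^{\mathcal{R}}_t=0$, which contradicts $\Delta G^{\mathcal{R}}_t\geq\Delta R_{i_1t}>0$. Your extra remarks are accurate but not needed; the appeal to Theorem~\ref{thm:unique} in particular is superfluous, and it would bring in that theorem's strict-slope hypotheses, whereas $G$ is already fixed by the definition of the set.
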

This result follows directly from the definition of
$\mathcal{G}(G^{\mathrm B},G)$. Since every allocation in this set
preserves the output of each generator,
$
\sum_{i\in I}\Delta g_{jhit}=0$, $\forall j\in I,\ h\in H_j,\ t\in T,
$
and hence $\Delta G^{\mathcal{R}}_t=0$. Therefore, the additionality
condition cannot hold when $\Delta R_{i_1t}>0$.

The above results motivate two questions for the numerical analysis:
whether lower contract-attributed emissions are accompanied by lower
physical system emissions, and whether restricting the flexibility to
reshuffle bilateral contracts narrows any gap between the two.
Section~\ref{sec:case} examines these questions numerically.
\section{Numerical Case Study} \label{sec:case}
\subsection{Data, Assumptions, and Scenarios}

We apply the IEEE Reliability Test System (RTS 24-Bus) \cite{rts99} to
illustrate the model described in Section~\ref{sec:model}. The system
consists of 24 buses, 38 transmission lines, 17 constant-power loads with
a total demand of 2,850 MW, and 32 generating units. Of the 32 generating
units, six hydropower units are excluded from the analysis because they
are assumed to operate at their maximum output of 50 MW \cite{wang03}.
We assume that all generators are owned by a single firm, as the analysis does not focus on market power.
To generalize the results, the transmission limit of each line is reduced to 60\% in order to produce non-uniform LMPs (Locational Marginal Prices). 
The marginal cost of generation is modeled as a linear function of output, parameterized by the coefficient vectors \( C^0 \) and \( C^1 \), whose elements \( c^0_{jh} \) and \( c^1_{jh} \) are defined in \eqref{eq:Marginal_Cost}.
We assume a hyperscaler is located at node 24, where three MDCs are situated in nodes 11, 12, and 17.
Node 24 is one of the nodes that experiences a lower power price at the baseline.\footnote{In fact, analysis in \cite{peskoe25} indicates that energy expenditures 
account for more than 40\% of total datacenter operating costs. 
Hyperscalers owned by major technology companies often collaborate with 
utilities to secure discounted long-term tariffs. Recent debates have 
centered on the extent to which other consumers bear a disproportionate 
share of the fixed costs induced by the rapid expansion of datacenters and the impact on local communities.
}
The parameter $\nu$ is assumed to be equal to 1,700 GPUs/MW, if hosted NVIDIA H100 or NVIDIA A100, which consumes roughly 700 W/GPU and 400 W/GUP per hour \cite{nvidia_dgxh100_user_guide}. 
For easy exposition, we simulate only a single period with $T=1$.



The RTS 24-Bus case is first solved as a least-cost minimization problem with fixed nodal load to obtain the dual variables. 
The dual variables, coupled with an assumed price elasticity of -0.2, are then used to define the affine inverse demand function. 
The estimated price elasticity of demand is consistent with findings from previous studies \cite{azevedo11}.

We examine the ability of a hyperscaler to pursue sustainability (i.e., lowering CO$_2$ emissions) by leasing computing infrastructure from MDCs. 
These MDCs are strategically co-located at the edge of the network, where renewable energy is frequently curtailed due to insufficient demand or limited transmission capacity.

We assume the capacities of MDC1--MDC3 are equal to 12, 10, and 2 MW, respectively. 
We assume that MDC1 can process all five batches, MDC2 can process only 
batches 1--3, and MDC3 can handle only batches 4 and 5.
The endowed ``curtailed'' renewables for three MDCs 1--3 equal 2, 4, and 5 MWh, respectively. 
The workload in MW that needs to be processed by the hyperscaler is equal to 75, 42, 51, 48 and 69, all in MW for types 1--5, respectively. All batches collectively account for roughly 10\% of the baseline loads.

Our primary results focus on two scenarios: 1) $\delta=0.1$ refers to the case when the hyperscaler places greater weight on  CO$_2$ emissions, and 2) $\delta=0.9$ refers to the case when the hyperscaler is concerned more about processing costs.
For each case, we investigate two emission disclosure schemes: {\it{ex ante}} and {\it{ex post}}. 
We also examine the impact of the hyperscaler's preference, denoted by $\delta$, on CO$_2$ emissions, processing costs, and leasing costs in Section \ref{sec:sen}. 

In reporting emissions, we distinguish contract-attributed data-center
emissions from total system emissions, with the latter determined by
aggregate generator output, consistent with the
distinction developed in Section~\ref{subsec:reshuffling}. 


\subsection{Ex Post Disclosure}
We first discuss the results of ``{\it{ex post}}'' emission disclosure in Table \ref{tab:cost_post}(a)--\ref{tab:rate_post}(a) for cost, emissions, and emission intensity, respectively, with a focus on two cases with $\delta=0.1$ (emissions) and $\delta=0.9$ (costs). 
Note that in both cases, the hyperscaler processes 74\% of the workload locally.

Table \ref{tab:cost_post}(a) suggests that when the hyperscaler places greater weight on emissions ($\delta=0.1$), the operator is willing to pay more (0.913 vs. 0.156 \$/GPU, reported in Table \ref{tab:rate_ante}(a) bottom panel) to have the workload processed by MDCs since they are powered by curtailed renewables. 
The total processing cost under $\delta=0.9$ is equal to \$71,983, which is significantly lower than that under $\delta=0.1$ at \$150,295. 

\begin{table}[htbp]
\centering
\caption{Results: Processing costs [\$] (top) \& Average Procurement cost[\$/MWh] (bottom)}
\vspace{0.2em}
\begin{minipage}[t]{0.47\linewidth}
\vspace{0pt}
\centering
\text{(a) {\it{Ex Post}}}\\
\setlength{\tabcolsep}{4pt}
\begin{tabular}{l r r}
\hline
\text{[\$]} & $\delta = 0.1$ & $\delta = 0.9$ \\
\hline
Local      & 113,031 & 65,617 \\
MDCs       & 37,264 & 6,366 \\
Total Cost & 150,295  & 71,983 \\
\hline
MDC1 & 58.89 & 58.89 \\
MDC2 & 52.91 & 52.91  \\
MDC3 & 0.00 & 0.00 \\
\hline
\end{tabular}
\label{tab:cost_post}
\end{minipage}
\hfill
\begin{minipage}[t]{0.47\linewidth}
\vspace{-3pt}
\centering
\text{(b) {\it{Ex Ante}}}\\[3pt]
\setlength{\tabcolsep}{4pt}
\begin{tabular}{l r r}
\hline
\text{[\$]} & $\delta = 0.1$ & $\delta = 0.9$ \\
\hline
Local      & 124,367 & 66,188 \\
MDCs       & 3,636  & 5,701 \\
Total Cost & 128,023 & 71,888\\
\hline
MDC1 & 61.06 & 58.57   \\
MDC2 &  55.44  & 52.31 \\
MDC3 & 0.00 & 0.00 \\
\hline
\end{tabular}
\label{tab:cost_ante}
\end{minipage}
\end{table}

Table~\ref{tab:em} summarizes the emissions from three sources: local
processing at the hyperscaler, outsourced processing at MDCs, and total
system emissions. 
To our surprise, when the hyperscaler focuses more on cost
(i.e., $\delta = 0.9$), its operation can actually also result in a lower level of emissions from outsourcing workloads to MDCs, while emissions from local processing at the hyperscaler remain nearly unchanged.
The workload-related CO$_2$ footprint under $\delta=0.9$ is reduced  by 3.56 t (= 39.59--43.15), or approximately 8\%.
This occurs because the {\it ex post} scheme is an imperfect instrument:
since emissions are disclosed only after operations, the hyperscaler cannot distinguish among MDCs by emission intensity when allocating workloads. 
Crucially, this reduction occurs in contract-attributed workload
emissions rather than in physical system emissions. Total system
emissions remain unchanged at 1{,}362.76 metric tons of CO$_2$ in the
two cases, illustrating Corollary~\ref{cor:reshuffling}: cleaner
contractual attribution need not correspond to lower physical system
emissions. We examine whether restricting this contract reshuffling can
mitigate the effect in Section~\ref{sec:joint}.


\begin{table}[htbp]
\centering
\caption{Results: Emissions [t]}
\label{tab:em}
\vspace{0.2em}
\begin{minipage}[t]{0.47\linewidth}
\centering
\setlength{\tabcolsep}{4pt}
\text{(a) {\it{Ex Post}}}\\[3pt]
\begin{tabular}{l r r}
\hline
 & $\delta = 0.1$ & $\delta = 0.9$ \\
\hline
Local        & 32.51  & 32.51 \\
MDCs         & 10.64  & 7.08 \\
Total        & 43.15 & 39.59 \\
\hline
System       &  1,362.76 & 1,362.76 \\
\hline
\end{tabular}
\label{tab:em_post}
\end{minipage}
\hfill
\begin{minipage}[t]{0.47\linewidth}
\centering
\text{(b) {\it{Ex Ante}}}\\[3pt]
\setlength{\tabcolsep}{4pt}
\begin{tabular}{l r r}
\hline
 & $\delta = 0.1$ & $\delta = 0.9$ \\
\hline
Local        & 34.50  & 32.51 \\
MDCs         & 1.40  & 6.46 \\
Total        & 35.90  & 38.97 \\
\hline
System       & 1,361.50 & 1,370.41\\
\hline
\end{tabular}
\label{tab:em_ante}
\end{minipage}
\end{table}

Table \ref{tab:rate_post}(a) also reports the emission intensity by MDCs. 
Except for MDC3, which has an intensity of 0 kg/MWh as it is powered only by renewables, the other two MDCs enter power purchase agreements with non-renewable suppliers.
The fact that emissions are disclosed {\it{ex post}} means that the hyperscaler does not have information to price GPU leasing differently based on emission intensity and thus leads to the same leasing offer for each MDC. 

\begin{table}[htbp]
\centering
\caption{Results: Emission rate [kg/MWh] (top) \& leasing cost [\$/GPU] (bottom)}
\vspace{0.2em}
\begin{minipage}[t]{0.47\linewidth}
\centering
\setlength{\tabcolsep}{4pt}
\text{(a) {\it{Ex Post}}}\\[3pt]
\begin{tabular}{l r r}
\hline
\text{} & $\delta = 0.1$ & $\delta = 0.9$ \\
\hline
MDC1 & 621.0 & 621.0 \\
MDC2 & 739.0 &  145.0\\
MDC3 & 0.0   & 0.0   \\
\hline
MDC1 & 0.913 & 0.156 \\
MDC2 & 0.913 & 0.156  \\
MDC3 & 0.913 & 0.156 \\
\hline
\end{tabular}
\label{tab:rate_post}
\end{minipage}
\hfill
\begin{minipage}[t]{0.47\linewidth}
\centering
\text{(b) {\it{{Ex Ante}}}}\\[3pt]
\setlength{\tabcolsep}{4pt}
\begin{tabular}{l r r}
\hline
\text{[kg/MWh]} & $\delta = 0.1$ & $\delta = 0.9$ \\
\hline
MDC1 & 621.1 & 559.1 \\
MDC2 & 621.0 & 145.0 \\
MDC3 & 0.00 & 0.00 \\
\hline
MDC1 & 0.036 & 0.127  \\
MDC2 &  0.033  & 0.152 \\
MDC3 & 0.930 & 0.157 \\
\hline
\end{tabular}
\label{tab:rate_ante}
\end{minipage}

\end{table}


\subsection{Ex Ante Disclosure}
The workload processing costs under the {\it ex ante} scheme are reported in Table~\ref{tab:cost_ante}(b). The hyperscaler maintains total costs at \$71{,}888 under $\delta = 0.9$, comparable to the {\it ex post} case (\$71{,}983). In contrast, when $\delta = 0.1$, overall workload costs decrease by 15\%, from \$150{,}295 in Table~\ref{tab:cost_post}(a) to \$128{,}023 in Table~\ref{tab:cost_ante}(b). Under {\it ex ante} disclosure, the hyperscaler observes MDC emission intensities and therefore values MDC capacity leasing less as $\delta$ increases and as emission intensity rises (see Section~\ref{sec:sen}). For example, when emission intensity is 0~t/MWh, Table~\ref{tab:rate_ante}(b) shows that the hyperscaler is willing to pay \$0.930/GPU for MDC3 at $\delta = 0.1$, compared to \$0.157/GPU at $\delta = 0.9$. A similar pattern appears under the {\it ex post} scheme, although interpretation is complicated because emission profiles are not distinguished. For a given $\delta$, lower MDC emission intensity leads to higher capacity leasing prices under {\it ex ante} disclosure. Consistent with this, MDC1 and MDC2 command higher leasing prices under $\delta = 0.9$ than under $\delta = 0.1$, for example \$0.127 versus \$0.036 for MDC1 and \$0.152 versus \$0.033 for MDC2. This differentiation does not arise under {\it ex post} disclosure. Despite these pricing effects, overall system emissions in Table~\ref{tab:em_ante}(b) remain comparable to those in Table~\ref{tab:em_ante}(a), indicating that {\it ex ante} disclosure alone does not yield meaningful sustainability gains.



\subsection{Forward Contract Cases} \label{sec:joint}
The preceding results show that contract reshuffling is possible when
bilateral electricity procurement can be freely reallocated across
buyers. In practice, however, conventional loads may already be served
under forward contracts that commit a portion of their purchases to
specific suppliers. Such contractual commitments restrict the
reallocation of existing generation and may therefore limit contract
reshuffling.

To examine this effect, we introduce a forward-contract restriction as
a numerical extension of the baseline equilibrium model. Let
$G^{\mathrm{FC}}_{jhit}$ denote the bilateral purchase from generator
$h$ at bus $j$ by conventional consumer $i$ in the baseline case
without datacenter loads. We impose the lower bound
\begin{equation}
g_{jhit}
\geq
\tau G^{\mathrm{FC}}_{jhit},
\qquad
\forall i\in I^d,\ j\in I,\ h\in H_j,\ t\in T,
\label{eq:producer_fc}
\end{equation}
where $\tau\in[0,1]$ denotes the forward-contract position. Thus,
$\tau$ specifies the minimum fraction of each conventional consumer's
baseline bilateral purchase that must be retained. A larger $\tau$
leaves less scope for reallocating generation among conventional
consumers and datacenters, thereby restricting contract reshuffling.
In the numerical experiments, we consider
$\tau\in\{0.6,0.7,0.8,0.9\}$.

Figure~\ref{fig:forward_cong}(a) plots total system CO$_2$ emissions
as a function of $\delta$ under these forward-contract requirements
for the {\it ex ante} disclosure scheme. The solid black line indicates the maximum system-wide emissions of
1{,}370 metric tons of CO$_2$. Emission reductions
occur primarily when the hyperscaler places greater weight on emissions, corresponding to $\delta\leq0.6$. When the forward-contract position reaches
90\%, leaving substantially less flexibility for contract reshuffling,
system emissions are approximately 2--5\% below this benchmark.
These results illustrate that restricting the reallocation of
conventional consumers' existing contracts can translate the
hyperscaler's preference for cleaner procurement into reductions in
physical system emissions.

Figure~\ref{fig:forward_cong}(b) reports the corresponding network
congestion costs. Under the 60\%--80\% forward-contract cases,
congestion changes relatively little because substantial procurement
flexibility remains. Under the 90\% case, however, congestion declines
as $\delta$ increases. As the hyperscaler becomes more cost-aware, it
allocates more workload to MDCs to reduce processing costs, which in
this case also alleviates network congestion.

\subsection{Sensitivity Analysis} \label{sec:sen}
Figure~\ref{fig:alpha_co2rate}(a) shows the leasing costs of the three MDCs as a function of $\delta$. 
The solid blue denotes the {\it{ex post}} scheme while the color-coded dashed lines represent the {\it{ex ante}} scheme for three MDCs. 
Under the {\it{ex post}} scheme, the hyperscaler is only aware of CO$_2$ after the fact.
Thus, its willingness to pay for outsourcing workload is the same for all three MDCs.
However, when the hyperscaler becomes aware of the respective emission impact of each MDC under the {\it{ex ante}} scheme before the fact, as alluded to earlier, the operator's willingness to pay for the services then also depends on the emission intensity of each MDC in Fig. \ref{fig:alpha_co2rate}(b). 
Given MDC3 accrues no emission, the capacity leasing cost curve of MDC3 lies above the other two MDCs until the $\delta$ approaches 1 where the processing  cost is the only concern.

\begin{figure}[!t]
\centering
\begin{minipage}{0.49\columnwidth}
    \centering
       (a)
        \includegraphics[width=\linewidth]{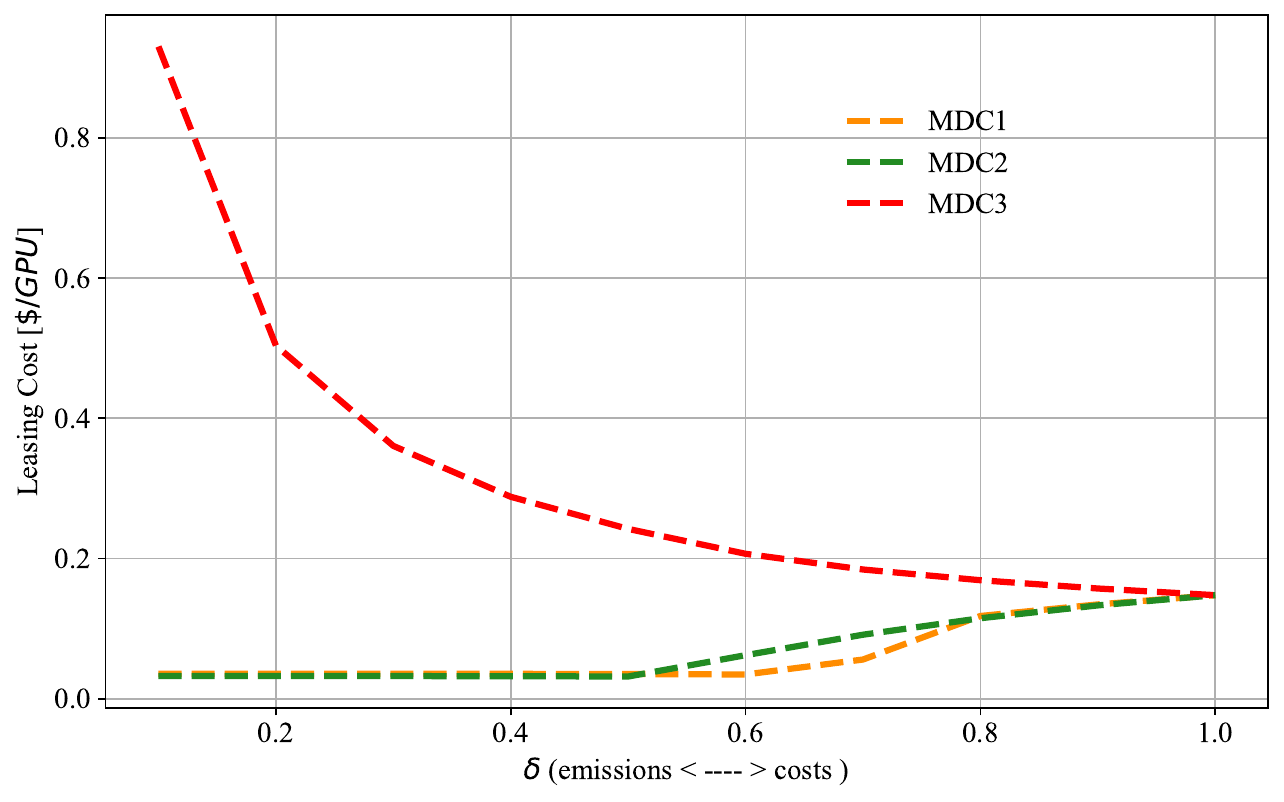}
\end{minipage}
\hfill
\begin{minipage}{0.49\columnwidth}
    \centering
   (b)  
    \includegraphics[width=\linewidth]{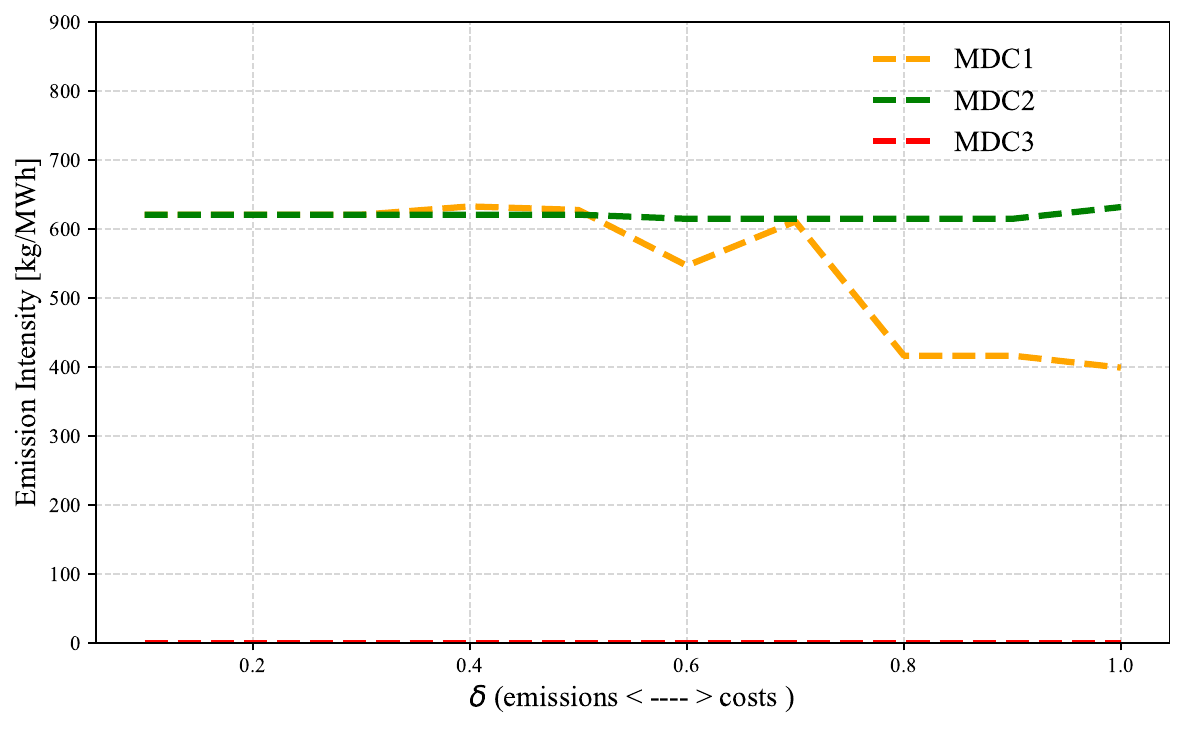}
\end{minipage}
\caption{(a) MDC leasing cost $\alpha$ and (b) MDC CO$_2$ emissions rate against $\delta$}
\label{fig:alpha_co2rate}
\end{figure}

However, the comparison between MDC1 and MDC2 is less clear. 
Because MDC2 is able to procure power at a lower price by an average of 6\$/MWh  across the scenarios, the owner is willing to operate the facility at a lower capacity leasing price (for $\delta$ between 0.5 and 0.7) while still maintaining its profit margin even though its emission intensity is lower as alluded to in Fig. \ref{fig:alpha_co2rate}(b). 
Overall, as $\delta$ approaches 1, emission intensity is no longer a concern for the hyperscaler, eventually leading to identical leasing prices for all MDCs.


Figures~\ref{fig:emission}–\ref{fig:cost} present datacenter workload-related processing costs and CO$_2$ emissions for the {\it ex post} and {\it ex ante} disclosure cases in panels (a) and (b), respectively. Each figure shows three components: workloads processed locally by the hyperscaler (red dashed line), workloads outsourced to MDCs (green dashed line), and total quantities (blue dashed line). In both disclosure regimes, increasing $\delta$ shifts the datacenter toward cost minimization, resulting in lower local and total processing costs, as shown in Fig.~\ref{fig:cost}. At the same time, CO$_2$ emissions rise and eventually spike as $\delta$ approaches 1, as shown in Fig.~\ref{fig:emission}, when processing cost becomes the dominant objective. Collectively, these figures illustrate a trade-off between CO$_2$ emissions and total processing costs. Across all cases under both schemes, however, total system CO$_2$ emissions remain essentially unchanged, indicating that contract reshuffling among conventional consumers, MDCs, and the hyperscaler yields no meaningful emission reduction. We examine the role of forward contracts in the next section.

\begin{figure}[!t]
\centering
\begin{minipage}{0.49\columnwidth}
    \centering
        {(a) \it{ex post}}
        \includegraphics[width=\linewidth]{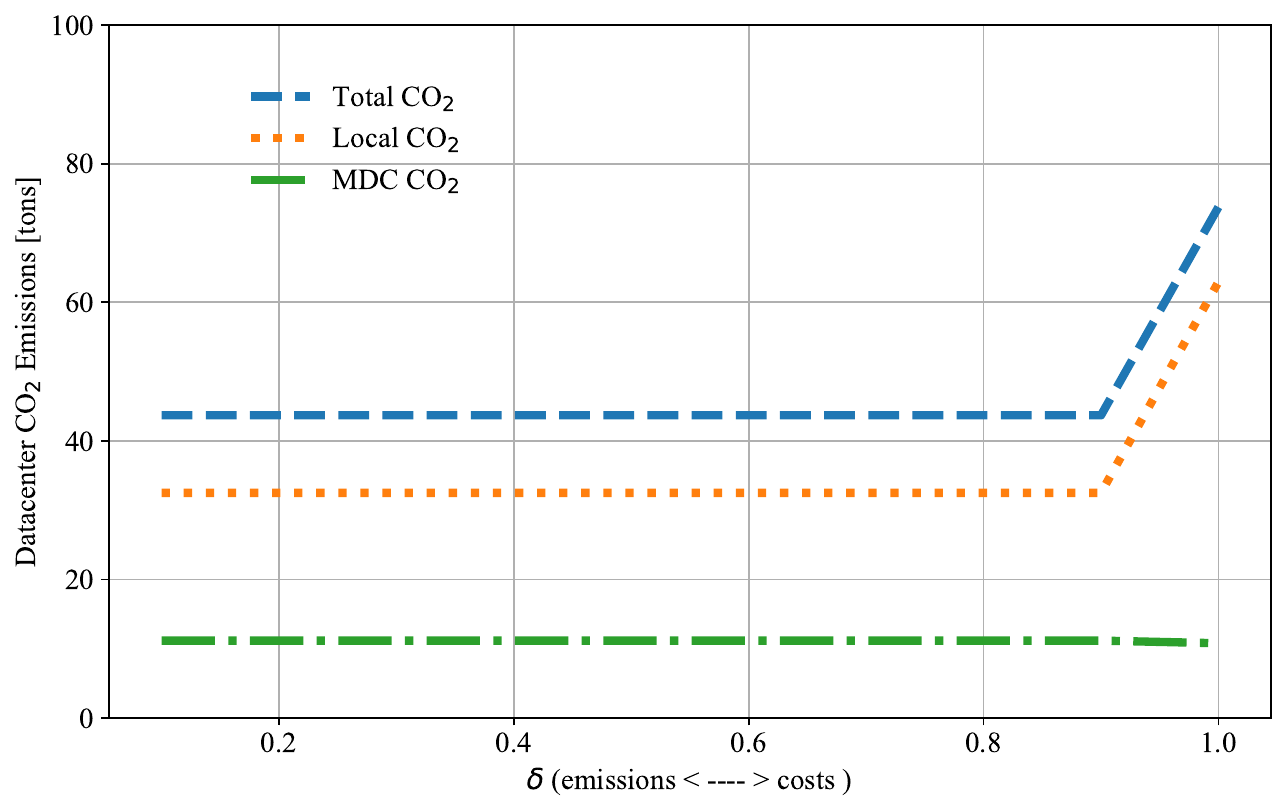}
\end{minipage}
\hfill
\begin{minipage}{0.49\columnwidth}
    \centering
    {(b) \it{ex ante}}    \includegraphics[width=\linewidth]{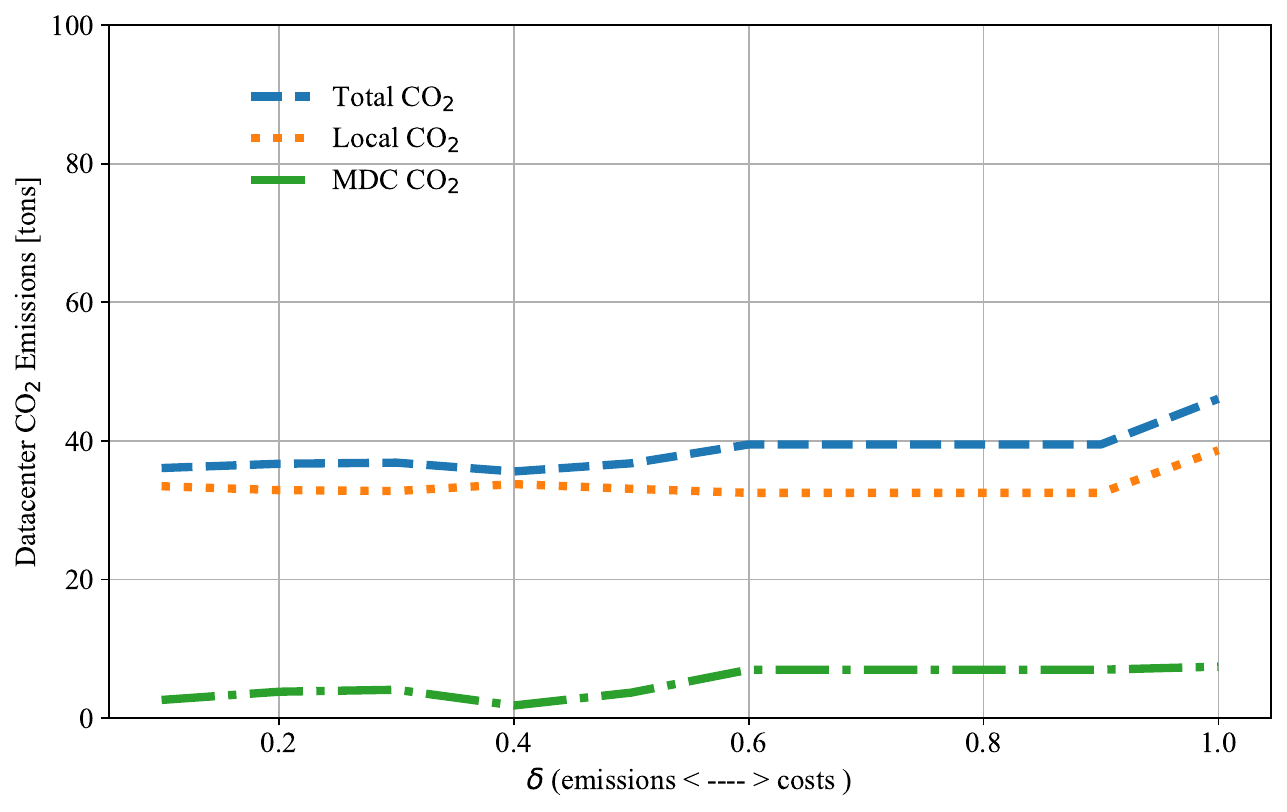}
\end{minipage}
\caption{datacenters workload CO$_2$ emissions against $\delta$}
\label{fig:emission}
\end{figure}

\begin{figure}[!t]
\centering
\begin{minipage}{0.49\columnwidth}
    \centering
   {(a) \it{ex post}} 
   \includegraphics[width=\linewidth]{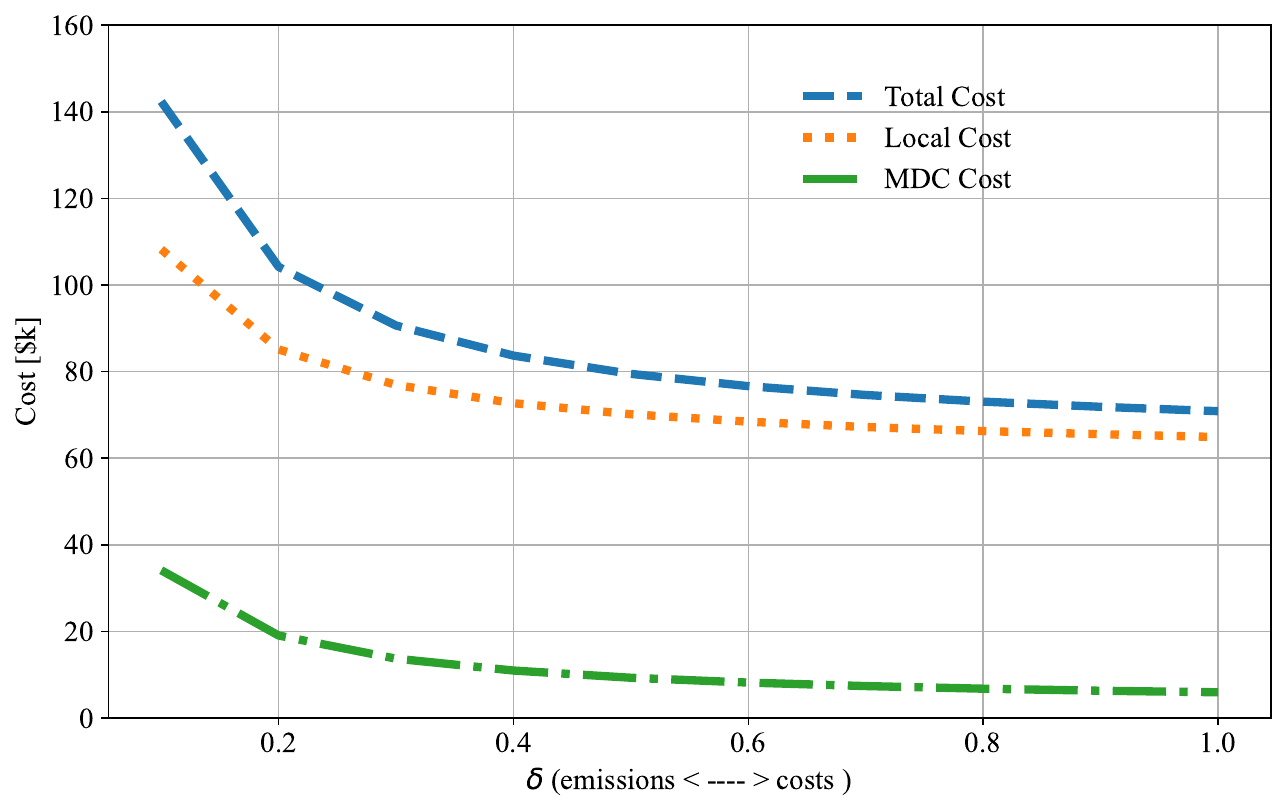}
\end{minipage}
\hfill
\begin{minipage}{0.49\columnwidth}
    \centering
    {(b) {\it{ex ante}}} 
    \includegraphics[width=\linewidth]{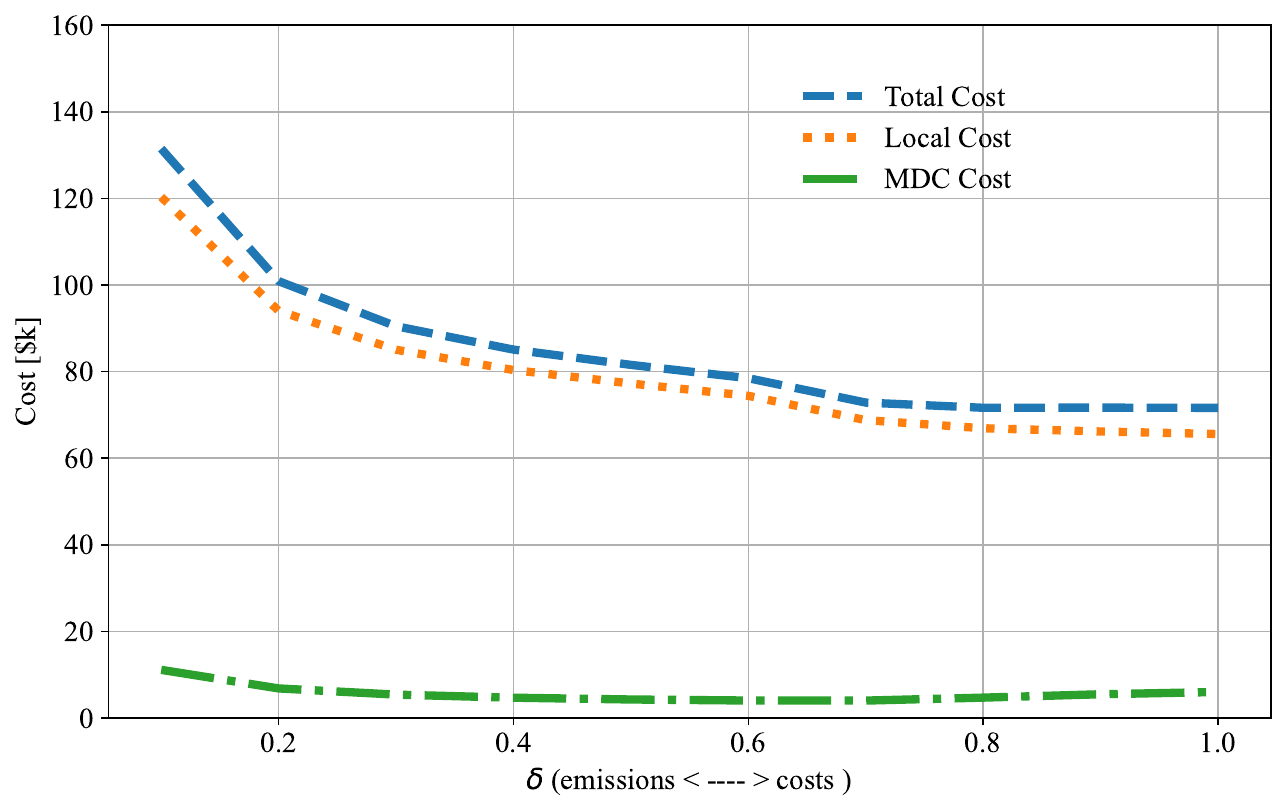}   
\end{minipage}
\caption{datacenters workload processing costs against $\delta$}
\label{fig:cost}
\end{figure}


\begin{figure}[!t]
\centering
\begin{minipage}{0.49\columnwidth}
    \centering
    {(a) CO$_2$ emission} \includegraphics[width=\linewidth]{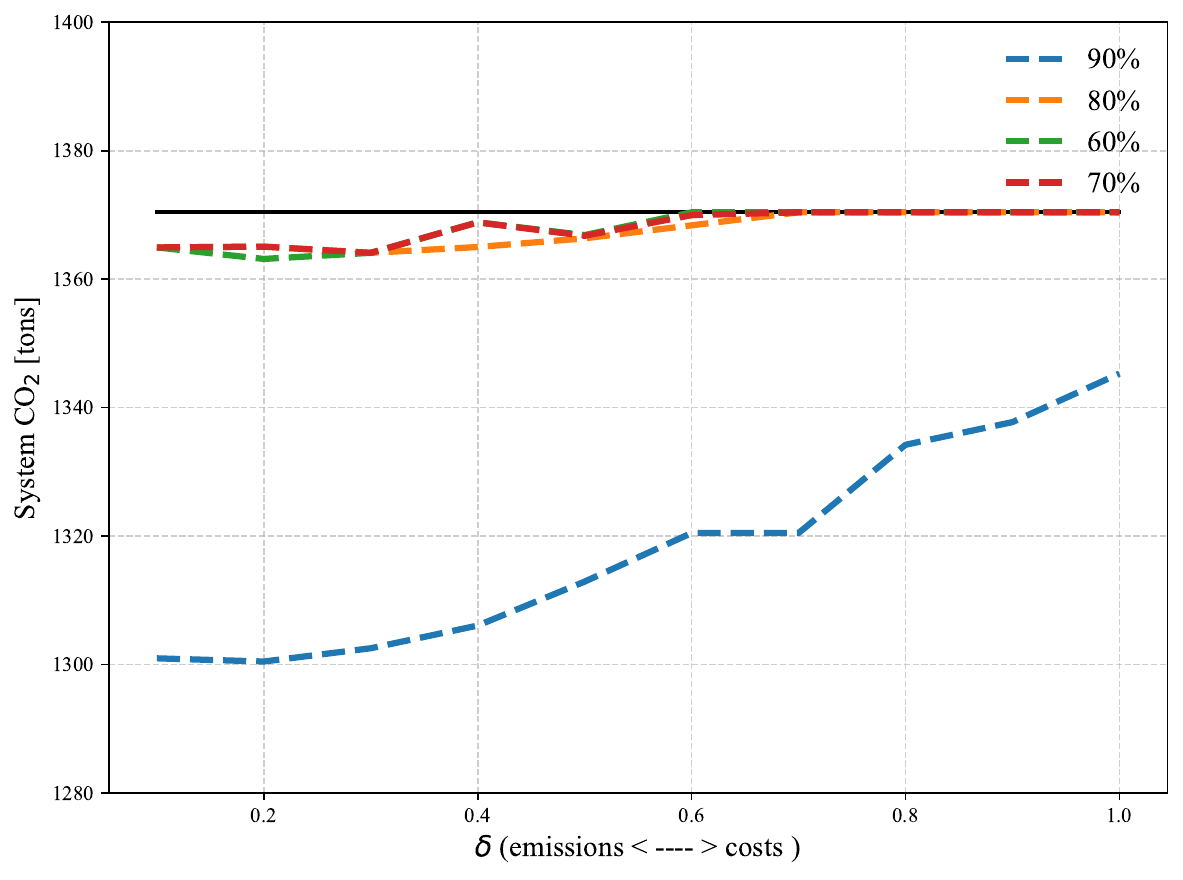}
\end{minipage}
\hfill
\begin{minipage}{0.49\columnwidth}
    \centering
    {(b) Congestion} \includegraphics[width=\linewidth]{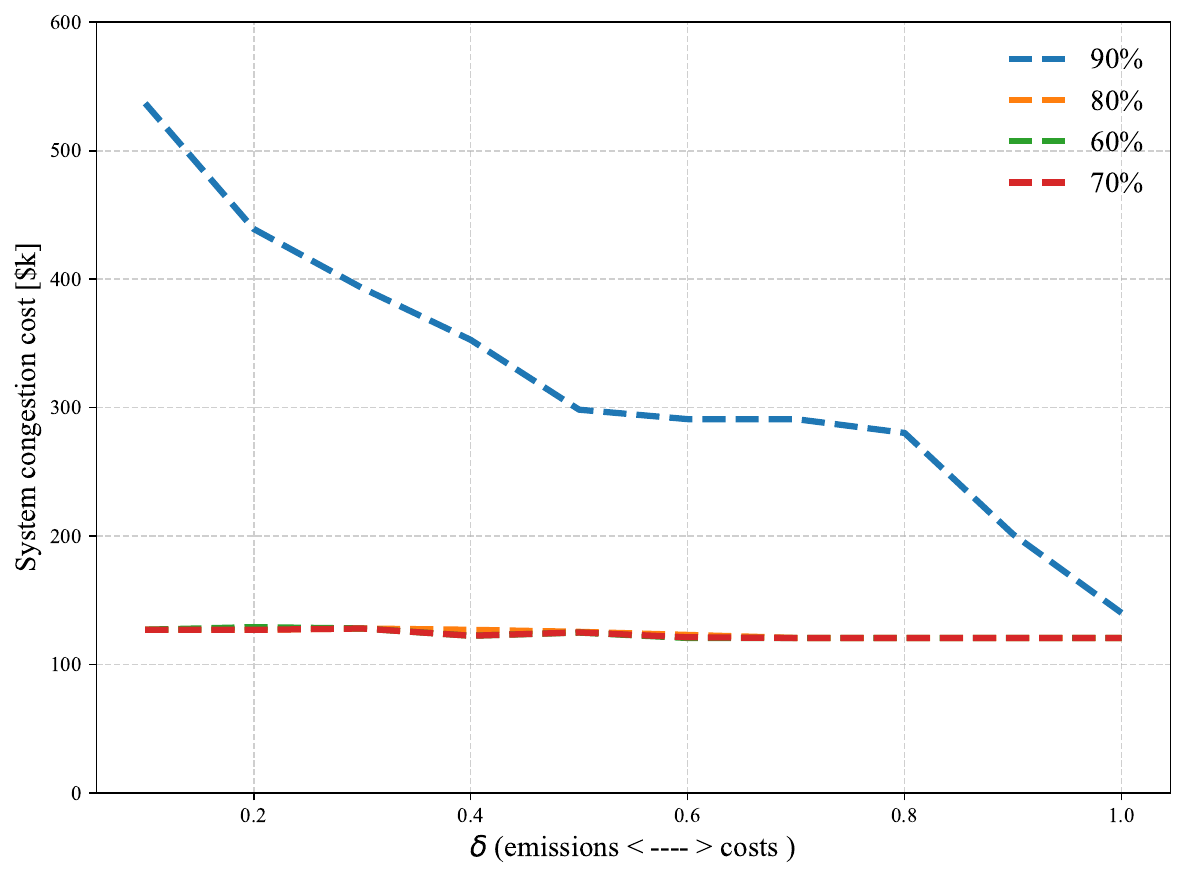}   
\end{minipage}
\caption{Total system CO$_2$ emissions and congestion cost against $\delta$ under different forward contract positions for the {\it{ex ante}} scheme}
\label{fig:forward_cong}
\end{figure}


\section{Conclusions}\label{sec:concl}

This paper studies whether carbon-aware migration of inference workloads
to geographically distributed MDCs can translate into reductions in
physical power-system emissions. We develop a market-equilibrium model
that jointly represents capacity leasing between a hyperscaler and MDCs,
bilateral electricity procurement, and network-constrained power-system
operations. Two emission-disclosure schemes are considered. Under the
{\it ex post} scheme, MDC emissions are disclosed after operation,
whereas under the {\it ex ante} scheme, MDC emission intensities are
available to the hyperscaler when it makes its workload-allocation and
capacity-leasing decisions.

Our theoretical analysis establishes existence of a market equilibrium
and uniqueness of aggregate consumer demand and generator output under
the stated conditions. More importantly, these unique physical quantities
need not uniquely determine bilateral electricity allocations. Electricity
purchases can be reshuffled across buyers while preserving generator
output, allowing a hyperscaler to obtain a cleaner contractual attribution
without a corresponding reduction in physical system emissions. This
distinction between contractual attribution and physical outcomes is
central to evaluating the emissions implications of carbon-aware
computing.

The IEEE RTS-24 case study illustrates the practical significance of
this mechanism. Under {\it ex ante} disclosure, carbon information
affects the hyperscaler's capacity-leasing offers and can substantially
reduce the emissions attributed to its workloads, while the corresponding
change in total system emissions remains small. When conventional
consumers retain larger portions of their existing bilateral purchases
through forward contracts, however, the scope for contract reshuffling
is reduced and carbon-aware workload allocation produces larger physical
emission reductions. A notable network-level effect also emerges under a
sufficiently restrictive forward-contract position: as the hyperscaler
becomes more cost-aware, it shifts additional workload to MDCs at the
network edge, thereby reducing congestion in the case study. These results
suggest that carbon disclosure alone may be insufficient to ensure that
cleaner procurement translates into cleaner physical generation; the
contractual structure of electricity procurement also matters.

Our analysis is subject to several limitations. First, we assume that all
workload batches yield the same revenue regardless of their latency
requirements. In practice, inference requests can differ substantially in
their time sensitivity and economic value. Accounting for heterogeneous
service requirements could therefore affect both workload allocation and
the capacity-leasing prices offered to individual MDCs. Second, we assume
that the amount of renewable energy that would otherwise be curtailed is
known and fixed. In practice, renewable curtailment is uncertain and varies
with system conditions; consequently, capacity-leasing decisions may depend on
expectations about future renewable availability. Extending the framework
to incorporate such uncertainty provides an important direction for
future research.

\appendix

\section{Data}
Table \ref{tab:data_summary} provides the generator data used in the case study. Load data of conventional consumers are directly based on RTS 24 system \cite{rts99}, and all the other data, e.g., PTDF, will be available upon request. 

\begin{table}[htbp]
  \centering
  \caption{Generator Data}
  \label{tab:data_summary}
  \begin{tabular}{crrrr}
    \toprule
    $h$ & \multicolumn{1}{c}{$c_0$} & \multicolumn{1}{c}{$c_1$} & \multicolumn{1}{c}{$e$} & \multicolumn{1}{c}{$P_{\max}$} \\
    \midrule
    1  & 48.4 & 0.44 & 633 & 20  \\
    2  & 48.4 & 0.44 & 633 & 20  \\
    3  & 11.0 & 0.01 & 145 & 76  \\
    4  & 11.0 & 0.01 & 145 & 76  \\
    5  & 48.4 & 0.44 & 633 & 20  \\
    6  & 48.4 & 0.44 & 633 & 20  \\
    7  & 11.0 & 0.01 & 145 & 76  \\
    8  & 11.0 & 0.01 & 145 & 76  \\
    9  & 25.4 & 0.07 & 615 & 100 \\
    10 & 25.4 & 0.07 & 615 & 100 \\
    11 & 25.4 & 0.07 & 615 & 100 \\
    12 & 28.5 & 0.02 & 739 & 197 \\
    13 & 28.5 & 0.02 & 739 & 197 \\
    14 & 28.5 & 0.02 & 739 & 197 \\
    15 & 38.9 & 0.08 &  56 & 12  \\
    16 & 38.9 & 0.08 &  56 & 12  \\
    17 & 38.9 & 0.08 &  56 & 12  \\
    18 & 38.9 & 0.08 &  56 & 12  \\
    19 & 38.9 & 0.08 &  56 & 12  \\
    20 &  9.3 & 0.01 & 220 & 155 \\
    21 &  9.3 & 0.01 & 220 & 155 \\
    22 & 13.5 & 0.00 & 621 & 400 \\
    23 & 13.5 & 0.00 & 621 & 400 \\
    24 &  9.3 & 0.01 & 220 & 155 \\
    25 &  9.3 & 0.01 & 220 & 155 \\
    26 &  8.6 & 0.01 & 440 & 350 \\
    \bottomrule
  \end{tabular}
\end{table}

\section{The MLCP Formulation -- Additional Details}
\label{sec:AppA}
This appendix provides additional details on the matrices $N$, $P$,
and $Q$, and the vectors $q$ and $r$ in
\eqref{eq:micp_comp}--\eqref{eq:micp_eq}. 
The matrix $N$ collects the coefficients of the free variables $\pi$
in the complementarity conditions. Consistent with the ordering of
$z$ and $\pi$ in \eqref{eq:z_def} and \eqref{eq:pi_def}, let
$N_{u,v}$ denote the block of $N$ corresponding to the $u$-block of
$z$ and the $v$-block of $\pi$. The nonzero blocks of $N$ are
\begin{align}
N_{d,\theta^d} &= E_d, &
N_{g,\theta^d} &= -E_g^d, \nonumber \\
N_{g,\theta^\chi} &= -E_g^\chi, &
N_{g,\theta^\kappa} &= -E_g^\kappa, \nonumber \\
N_{g,\omega} &= A_\omega, &
N_{\mu^1,y} &= A_{\mu y}, \nonumber\\
N_{\mu^2,y} &= -A_{\mu y}, &
N_{p,\theta^\chi} &= E_p^\chi, \nonumber\\
N_{p,\eta} &= E_{p\eta}, &
N_{k^r,\eta} &= -E_{kr\eta}, \nonumber\\
N_{k^r,\alpha} &= -\nu E_{kr\alpha}, &
N_{s,\eta} &= -E_{s\eta}, \nonumber\\
N_{k^s,\alpha} &= \delta\nu E_{ks\alpha}, &
N_{k^s,\psi} &= E_{ks\psi}, \nonumber\\
N_{\ell,\theta^\kappa} &= \delta E_{\ell\theta}, &
N_{\ell,\psi} &= E_{\ell\psi}.
\label{eq:N_blocks}
\end{align}
All remaining blocks of $N$ are zero. Here, each $E_\bullet$ is an appropriately dimensioned selection or
incidence matrix with entries in $\{0,1\}$. The matrix $A_\omega$
maps the wheeling charges to the producer stationarity conditions so
that the row corresponding to $g_{jhit}$ contains
$\omega_{it}-\omega_{jt}$, while $A_{\mu y}$ is defined such that
$
(A_{\mu y}y)_{kt}
=
\sum_{i\in I}\mathrm{PTDF}_{ki}y_{it}.
$

The matrices $P$ and $Q$ collect, respectively, the coefficients of
the nonnegative variables $z$ and the free variables $\pi$ in the
linear equality conditions
$
Pz+Q\pi=r.
$
Consistent with the ordering of $\pi$ in \eqref{eq:pi_def}, these
equalities are obtained by stacking the grid operator conditions
\eqref{eq:kkt_yfree} and \eqref{eq:sumy_0}, the market-clearing
conditions \eqref{eq:mk_1}--\eqref{eq:mk_alpha}, the MDC
energy-balance condition \eqref{eq:mdc_energy_balance}, and the
hyperscaler batch-processing condition \eqref{eq:kkt_N}. The matrix
$P$ contains the coefficients multiplying $z$, following the sign
conventions of the corresponding equality conditions, while $Q$
contains the coefficients multiplying $\pi$. The constant right-hand sides of these equality conditions are collected
in $r$. Its only nonzero blocks are
\[
\hspace*{40pt}
r_\eta=\vec g^{\,c},
\qquad
r_\psi=\vec q_B.
\]
Here, each arrow denotes the vector obtained by stacking and, where
necessary, replicating the corresponding parameter to match the
dimension of the associated block of $\pi$. In particular,
$\vec g^{\,c}$ stacks the total curtailed renewable energy
$\sum_{h\in H_i^\omega}g^c_{iht}$ over $i\in I^\chi$ and $t\in T$,
whereas $\vec q_B$ stacks $q_b$ over $b\in B$ and replicates it over
$t\in T$. All remaining components of $r$ are zero.

Finally, the vector $q$ collects the constant terms in the complementarity
conditions. Its nonzero blocks are
\begin{equation*}
\begin{aligned}
q_d&=-\vec b^{\,0},&
q_g&=\vec c^{\,0},&
q_\lambda&=\vec G,\\
q_{\mu^1}&=\vec F,&
q_{\mu^2}&=\vec F,&
q_\rho&=\vec{\mathrm{Cap}},\\
q_\upsilon&=\vec g^{\,c},&
q_\ell&=(1-\delta)\vec e.
\end{aligned}
\end{equation*}

\bibliographystyle{cas-model2-names}
\bibliography{datacenter}

\end{document}